\begin{document}

\title{Efficient Solution of Backward Jump-Diffusion PIDEs with Splitting and Matrix Exponentials.
\thanks{Opinions expressed in this paper are those of the author, and do not necessarily reflect the views of Numerix LLC.}}
\author{Andrey Itkin}

\affil{\small Numerix LLC, \\ 150 East 42nd Street, 15th Floor, New York, NY 10017, USA}
\affil{and}
\affil{\small Polytechnic Institute of New York University, \\
6 Metro Tech Center, RH 517E, Brooklyn NY 11201, USA \\
aitkin@poly.edu}

\date{\today}

\maketitle

\begin{abstract}
We propose a new, unified approach to solving jump-diffusion partial integro-differential equations (PIDEs) that often appear in mathematical finance. Our method consists of the following steps. First, a second-order operator splitting on financial processes (diffusion and jumps) is applied to these PIDEs. To solve the diffusion equation, we use standard finite-difference methods, which for multi-dimensional problems could also include splitting on various dimensions. For the jump part, we transform the jump integral into a pseudo-differential operator. Then for various jump models we show how to construct an appropriate first and second order approximation on a grid which supersets the grid that we used for the diffusion part. These approximations make the scheme to be unconditionally stable in time and preserve positivity of the solution which is computed either via a matrix exponential, or via P{\'a}de approximation of the matrix exponent. Various numerical experiments are provided to justify these results.
\end{abstract}

\section{Introduction}
Partial integro-differential equations (PIDEs) naturally appear in mathematical finance if an underlying stochastic process is assumed to be a combination of diffusion and jumps. A wide class of L{\'e}vy processes fall into this category. In modern popular models such as stochastic volatility or, e.g., hybrid models, jumps could accompany any stochastic factor, thus increasing the overall complexity of the problem. For more details about jump-diffusion processes, see \cite{ContTankov, Sato:99}.

Unsurprisingly, most of these PIDEs cannot be solved in closed form. At the same time, a numerical counterpart must be efficient. This is especially important if such a jump-diffusion model is used not only for pricing (given the values of the model parameters), but for their calibration as well. While the solution of the diffusion part (PDE) has been numerously discussed in the literature and various methods were proposed (see, e.g., \cite{fdm2000, Duffy, BrennanSchwartz:1978, ContVolchkova2003, AA2000, HoutFoulon2010}), little can be found for the jump part, which according to the L{\'e}vy-Khinchine formula is represented by a non-local integral.

In this paper we also do not consider jump-diffusion models where the characteristic function (CF) is known in closed form, since then transform methods (FFT, cosine, wavelets etc.) seem to be the most efficient ones. We draw our attention to some particular settings where both the CF and pdf of the diffusion part are not known, while for the jump part the CF can be obtained in closed form. Various popular models are collected under such an umbrella, i.e. local volatility models with jumps, local stochastic volatility models with jumps, etc.

A number of methods were proposed to address the construction of an efficient algorithm for solving these type of PIDEs, see \cite{CarrMayo, Strauss2006, ItkinCarr2012Kinky} and references therein as well as discussion of problems related to their implementation. In particular, they include a discretization of the PIDE that is implicit in the differential terms and explicit in the integral term (\cite{ContVolchkova2003}), Picard iterations for computing the integral equation (\cite{Halluin2004, Halluin2005a}) and a second-order accurate, unconditionally-stable operator splitting (ADI) method that does not require an iterative solution of an algebraic equation at each time step (\cite{AA2000}). Various forms of operator splitting technique were also used for this purpose (\cite{ItkinCarr2012Kinky}). In this paper, we will review operator splitting on financial processes in more detail.

Assuming that an efficient discretization of the PIDE in time was properly chosen, the remaining problem is a fast computation of the jump integral, as it was observed to be relatively expensive. We mention three different approaches to numerical computation of this integral.\footnote{For some models it can be computed analytically, so in what follows we do not take these models into account.}

The first approach assumes a direct approximation of the integral on an appropriate grid and then applies some standard quadrature method, such as Simpson's rule or Gaussian quadrature. This approach may be computationally expensive for two reasons. First, usually the ``jump'' grid is not the same as the ``diffusion'' grid. Therefore, after the integral is computed, its values at the jump grid should be re-interpolated to the diffusion grid. Second, the integral is defined on an infinite domain, so either the domain has to be truncated or a non-uniform grid has to be used. Moreover, the complexity becomes greater if an implicit discretization of the integral is used, because it requires the  solution of a dense system of linear equations of a large size. Therefore, most often an explicit discretization is utilized, which brings  some constraints on the time steps to guarantee stability of the scheme.

However, an exponential change of variables reduces the expense of evaluating the integral at all points. This change converts the integral term into a correlation integral, which can be evaluated at all the grid points simultaneously using a Fast Fourier Transform (FFT). This approach has been suggested by many authors (\cite{fdm2000, AA2000, Wilmott1998}). Still, this could be expensive because a large number of FFT nodes may be required for better accuracy. Another issue is that using FFT to compute a product of matrix $A$ and vector requires $A$ to be circulant, while the matrix obtained after discretization of the jump integral is not of that type. Therefore, a direct (naive) usage of FFT for this purpose produces undesirable so-called ``wrap-around'' errors. A common technique to eliminate these errors is to embed $A$, which is actually a Toeplitz matrix, into a circulant matrix. This, in turn, requires doubling the initial vector of unknowns, which makes the algorithm slower. This approach was improved in \cite{Halluin2005b}, still some extension of the computational region is required in both upper and lower directions while not doubling the grid size. Also linear interpolation with a pre-computed coefficients was proposed to transform option values from the non-uniform diffusion grid to the uniform jump grid, which keeps the second order of approximation, and is efficient performance-wise.

The second approach to computing the jump integral utilizes an alternative representation of this integral in the form of a pseudo-differential operator, which puts  the entire PIDE in the form of a fractional PDE. This problem was considered in \cite{Cartea2007} and \cite{ItkinCarr2012Kinky}. A recent survey of the existing literature on this subject and techniques for computation of the jump integral using the Grunwald-Letnikov approximation (which is of the first order in space) is given in \cite{AndersenLipton2012}. As it is known from \cite{AbuSaman2007, MeerschaertTadjeran2004, Tadjeran2006, MeerschaertTadjeran2006, Sousa2008}, a standard Grunwald-Letnikov approximation leads to unconditionally unstable schemes. To improve this, a shifted Grunwald-Letnikov approximation was proposed, which allows construction of an unconditionally stable scheme of the first order in space.\footnote{A second-order approximation could in principle be constructed as well, but this would result in a massive calculation for the coefficients. Therefore, this approach was not further elaborated on.} However, solving pricing equations to second order  in the space variable is almost an industry standard, and therefore this method requires further investigation to address this demand.

The third method exploits a nice idea first proposed in \cite{CarrMayo}.  Carr and Mayo found that for some L{\'e}vy models, the solution of the integral evolutionary equation\footnote{This equation naturally arises at some step of the splitting procedure, if splitting is organized by separating diffusion from jumps.} is equivalent to the solution of a particular PDE. The problem is then to find a proper space-differential operator (kernel) to construct such a PDE. Carr and Mayo demonstrated the advantage of this approach for the Merton and Kou models, and showed which parabolic equations provide the necessary solution. Later in \cite{ItkinCarr2012Kinky}, this idea was further generalized to the class of pseudo-parabolic equations as applied to a class of L\'evy processes, known as GTSP/CGMY/KoBoL models. These pseudo-parabolic equations could be formally analytically solved via a matrix exponential. Itkin and Carr then discuss a numerical method to efficiently compute this matrix exponential. When the parameter $\alpha$ of the GTSP/CGMY/KoBoL model is an integer, this method uses a finite-difference scheme similar to those used for solving parabolic PDEs, and the matrix of this finite-difference scheme is banded. Therefore, in this case, the computation of the jump integral:
\begin{itemize}
\item Is provided on the same grid as was constructed for the diffusion (parabolic) PDE. Outside of this domain (if ever needed, e.g. for European options),  the PIDE grid is further extended to an infinite domain,\footnote{In other words the PIDE grid is a superset of the corresponding PDE grid.} but no interpolation is required afterwards.
\item At every time step it has linear ($O(N)$) complexity in the number of the grid nodes $N$, since the results (e.g., option prices) are given by solving a linear system of equations with a banded matrix. In the case of a real parameter $\alpha$, Itkin and Carr suggested computing the prices using the above algorithm at three values of an integer $\widetilde{\alpha}$ closest to the given real $\alpha$, and then interpolating using any interpolation of the second order.
\end{itemize}

In this paper we use a different flavor of this idea. First, we use an operator-splitting method on the financial processes, thus separating the computation of the diffusion part from the integral part. Then, similar to \cite{ItkinCarr2012Kinky}, we represent the jump integral in the form of a pseudo-differential operator. Next we formally solve the obtained evolutionary partial pseudo-differential equations via a matrix exponential. We then show that the matrix exponential can be efficiently computed for many popular L{\'e}vy models, and that the efficiency of this method could be not worse than that of the FFT. The proposed method is almost universal, i.e., allows computation of PIDEs for various jump-diffusion models in a unified form. We also have to mention that this method is relatively simple for implementation.

Note, that the idea of this method is in some sense close to another popular approach - Fourier Space Time-Stepping Method (FSTS), see \cite{Surkov2007}. The advantage of both approaches lies in the fact that the jump integral could contain singularities, while in FSTS and in the present approach these singularities are integrated out. The difference is that in FSTS this is done by switching to the Fourier space (one FFT). Then FSTS uses finite-difference method directly in the Fourier space, and finally switches back to the price space (one inverse FFT) at each time step. In our approach we eliminate these two extra FFTs at every time step since we are working in the price space all the time. The second difference is that FSTS treats both diffusion and jump operators in a symmetric way by switching all calculations to the Fourier space. This, however, can not be done, if the CF of the diffusion operator is not known, for instance for LV or LSV models, while this is not a limitation for our approach.

Let us also mention one more method proposed by \cite{LiptonSepp2009a} as applied to the Stern-Stern model. Though it is not evident how to generalize this method for other models, it provides a very efficient computational algorithm for this particular model. Also for simpler jump models, like that of Merton and Kou, there are some other efficient methods  in  the literature, see, e.g., \cite{Tangman2008a, Lee2012}.

Based on the above survey, we can conclude that in our domain of models (LV + jumps, LSV + jumps, etc.) the most relevant predecessors of our work are \cite{Halluin2004} in general, and for Merton and Kou models - \cite{CarrMayo}. Therefore, we want to underline the differences between these approaches and that in this paper:
\begin{enumerate}
\item For Merton's model following our general approach we re-derive the result of \cite{CarrMayo}.
\item For Kou's model again following our general approach we derive a different flavor of \cite{CarrMayo}.
\item For CGMY model to get a second order approximation in time \cite{Halluin2004} use Picard iterations, while here we provide two flavors of the method: one is similar to
    \cite{Halluin2004} and uses FFT to multiply matrix by vector (with the complexity $O(N \log N)$; the other one, which doesn't need iterations, exploits matrix exponential (with the complexity $O(N^2)$). In the latter case our method also doesn't need to extend an FFT grid to avoid wrap-around effects.
\item For CGMY model method of \cite{Halluin2004} experiences some difficulties when parameter $\alpha$ of the CGMY model is close to 2, see \cite{WangWanForsyth2007}. Here we show what is the source of this problem and propose another method to address this issue.
\item For CGMY model a special treatment of the area close to $x=0$ is required, see \cite{ContVolchkova2003, WangWanForsyth2007}. Here there is no such a problem due to an analytical representation of the jump integral in the form of a pseudo-differential operator (i.e, this singularity is integrated out analytically).
\end{enumerate}

As far as the complexity of the proposed methods is concerned let us mention the following.
    \begin{itemize}
    \item For Merton's and Kou's jumps the algorithms that reduce the total complexity to $O(N)$ per time step are presented in this paper. For the Merton's jumps this includes a new idea to use Fast Gauss Transform, \cite{IFGT} instead of the finite difference method when solving the intermediate heat equation.
    \item For CGMY model with $\alpha < 0$ it is clear that the method is very similar to that of \cite{WangWanForsyth2007} since our experiments show that the matrix exponential is less efficient than FFT in this case, and the L\'evy kernel doesn't have singularities. Instead we
        recommended to use a different flavor of this method, see \cite{ItkinCarr2012Kinky} is more efficient with complexity $O(N)$.
    \item For CGMY model with $0 < \alpha < 1$ again the method is similar to that of \cite{WangWanForsyth2007}. However, in our case it doesn't require a special treatment  of the point $x=0$ because this singularity was already integrated out. On the other hand, in this region we provide only a first order scheme $O(h)$ leaving extension of the method to $O(h^2)$ as an open yet problem. From this prospective in this region the method of \cite{WangWanForsyth2007} is more accurate. Again, approach of \cite{ItkinCarr2012Kinky} would improve it if one uses the approach of this paper to compute the price at some $1 < \alpha < 2$, and then use it in the interpolation procedure of \cite{ItkinCarr2012Kinky}. The total complexity for the entire algorithm then coincide with the complexity in the case $1 < \alpha < 2$.
    \item For CGMY model with $1 < \alpha < 2$ our method has some advantage as compared with that of \cite{WangWanForsyth2007}, namely: i) computation of the matrix exponential eliminates the necessity for Picard iterations which poorly converge in this case (we also explain why a slow convergence is observed in the latter approach), and ii) the singularity close to $\alpha = 2$ is already integrated out, and, therefore, the method works fine in this case even at $\alpha $ close to 2 (the results in the last section are provided for $\alpha = 1.95$).
    \end{itemize}

The rest of the paper is organized as follows. In section~\ref{Sec2} we briefly discuss a general form of a backward PIDE for the class of L{\'e}vy models. In Section~\ref{Sec3}, we introduce a splitting technique for nonlinear operators. In section~\ref{Sec4}, we present our general approach to the solution of the PIDE using a splitting and matrix exponential approach. An explicit construction of various finite-difference schemes of the first and second order is presented in the next section. There we consider the following jump models: Merton, Kou and GTSP (also known as CGMY or KoBoL). The results presented in the last two sections are new, and to the best of our knowledge have not been discussed in the literature. Our technique utilizes some results from matrix analysis related to definitions of M-matrices, Metzler matrices and eventually exponentially nonnegative matrices. We also give the results of various numerical tests to demonstrate convergence of our method. In section~\ref{Sec6}, some additional numerical examples are presented that consider all steps of the splitting algorithm, not just the jump part as in the previous sections. The final section concludes.

\section{L{\'e}vy Models and Backward PIDE } \label{Sec2}
To avoid uncertainty, let us consider the problem of pricing equity options written on a single stock. As we will see, this specification does not cause us to lose any generality, but it makes the description more practical. We assume an underlying asset (stock) price $S_t$ be driven by an exponential of a L{\'e}vy process
\begin{equation} \label{Levy}
S_t = S_0 \exp (L_t), \quad 0 \le t \le T,
\end{equation}
where $t$ is time, $T$ is option expiration, $S_0 = S_t \ |_{t=0}$, $L_t$ is the L{\'e}vy process $L = (L_t)_{0 \le t \le T}$ with a nonzero Brownian (diffusion) part. Under the pricing measure, $L_t$ is given by
\begin{equation} \label{Lt}
L_t = \gamma t  + \sigma W_t + Y_t, \qquad \gamma, \sigma \in \mathbb{R}, \quad \sigma > 0,
\end{equation}
with  \LY triplet $(\gamma, \sigma, \nu$), where $W_t$ is a standard Brownian motion on $0 \le t \le T$ and $Y_t$ is a pure jump process.

We consider this process under the pricing measure, and therefore $e^{-(r-q) t} S_t$  is a martingale, where $r$ is the interest rate and $q$ is a continuous dividend. This allows us to express $\gamma$ as (\cite{Eberlein2009})
\[
\gamma = r - q - \frac{\sigma^2}{2} - \int_\mathbb{R} \left(e^x -1 -x {\bf 1}_{|x| < 1}\right)\nu(dx),
\]
\noindent where $\nu(dx)$ is a \LY measure which satisfies
\[ \int_{|x| > 1}e^x \nu(dx) < \infty.  \]

We leave $\nu(dx)$ unspecified at this time, because we are open to consider all types of jumps including those with finite and infinite variation, and finite and infinite activity.
\footnote{We recall that a standard Brownian motion already has paths of infinite variation. Therefore, the \LY process in \eqref{Lt} has infinite variation since it contains a continuous martingale component. However, here we refer to the infinite variation that comes from the jumps.}

To price options written on the underlying process $S_t$, we want to derive a PIDE that describes time evolution of the European option prices $C(x,t), \ x \equiv \log (S_t/S_0)$. Using a standard martingale approach, or by creating a self-financing portfolio, one can derive the corresponding PIDE (\cite{ContTankov})
\begin{multline} \label{PIDE}
r C(x,t) = \fp{C(x,t)}{t} + \left(r-\frac{1}{2}\sigma^2 \right) \fp{C(x,t)}{x} + \frac{1}{2}\sigma^2 \sop{C(x,t)}{x} \\
+  \int_\mathbb{R}\left[ C(x+y,t) - C(x,t) - (e^y-1)\fp{C(x,t)}{x} \right] \nu(dy)
\end{multline}
for all $(x,t) \in \mathbb{R} \times (0,T)$, subject to the terminal condition
\begin{equation}
C(x,T) = h(x),
\end{equation}
where $h(x)$ is the option payoff, and some boundary conditions which depend on the type of the option. The solutions of this PIDE usually belong to the class of viscosity solutions (\cite{ContTankov}).

We now rewrite the integral term using the following idea. It is well known from quantum mechanics (\cite{OMQM}) that a translation (shift) operator in $L_2$ space could be represented as
\begin{equation} \label{transform}
    \mathcal{T}_b = \exp \left( b \dfrac{\partial}{\partial x} \right),
\end{equation}
with $b$ = const, so
\[ \mathcal{T}_b f(x) = f(x+b). \]

Therefore, the integral in Eq.~(\ref{PIDE}) can be formally rewritten as
\begin{align} \label{intGen}
\int_\mathbb{R} \left[ C(x+y,t) \right. & \left. - C(x,t) - (e^y-1) \fp{C(x,t)}{x} \right] \nu(dy) =  \mathcal{J} C(x,t), \\
\mathcal{J} & \equiv \int_\mathbb{R}\left[
\exp \left( y \dfrac{\partial}{\partial x} \right) - 1 - (e^y-1) \fp{}{x} \right] \nu(dy). \nonumber
\end{align}

In the definition of operator $\mathcal{J}$ (which is actually an infinitesimal generator of the jump process), the integral can be formally computed under some mild assumptions about existence and convergence if one treats the term $\partial/ \partial x$ as a constant. Therefore, operator $\mathcal{J}$ can be considered as some generalized function of the differential operator $\partial_x$. We can also treat $\mathcal{J}$ as a pseudo-differential operator.

With allowance for this representation, the whole PIDE in the \eqref{PIDE} can be rewritten in operator form as
\begin{equation} \label{oper}
\partial_\tau C(x,\tau) = [\mathcal{D} + \mathcal{J}]C(x,\tau),
\end{equation}
\noindent where $\tau = T - t$ and $\mathcal{D}$ represents a differential (parabolic) operator
\begin{equation}
\mathcal{D} \equiv - r + \left(r-\frac{1}{2}\sigma^2 \right) \fp{}{x} + \frac{1}{2}\sigma^2 \sop{}{x},
\end{equation}
 where the operator $\mathcal{D}$ is an infinitesimal generator of diffusion.

Notice that for jumps with finite variation and finite activity, the last two terms in the definition of the jump integral $\mathcal{J}$ in \eqref{PIDE} could be integrated out and added to the definition of $\mathcal{D}$. In the case of jumps with finite variation and infinite activity, the last term could be integrated out. However, here we will leave these terms under the integral for two reasons: i) this transformation (moving some terms under the integral to the diffusion operator) does not affect our method of computation of the integral, and ii) adding these terms to the operator $\mathcal{D}$ negatively influences the stability of the finite-difference scheme used to solve the parabolic equation
$\mathcal{D} C(x,t) = 0$. This equation naturally appears as a part of our splitting method, which is discussed in the next section.

\section{Operator Splitting Technique}\label{Sec3}
To solve Eq.~(\ref{oper}) we use splitting. This technique is also known as the method of fractional steps (see \cite{yanenko1971, samarskii1964, dyakonov1964}) and sometimes is cited in financial literature as Russian splitting or locally one-dimensionally schemes (LOD) (\cite{Duffy}).

The method of fractional steps reduces the solution of the original $k$-dimensional unsteady problem to the solution of $k$ one-dimensional equations per time step. For example, consider a two-dimensional diffusion equation with a solution obtained by using some finite-difference method. At every time step, a standard discretization on space variables is applied, such that the finite-difference grid contains $N_1$ nodes in the first dimension and $N_2$ nodes in the second dimension. Then the problem is solving a system of $N_1 \times N_2$ linear equations, and the matrix of this system is block-diagonal. In contrast, utilization of splitting results in, e.g.,  $N_1$ systems of $N_2$ linear equations, where the matrix of each system is banded (tridiagonal). The latter approach is easy to implement and, more importantly, provides significantly better performance.

The previous procedure uses operator splitting in different dimensions. \cite{marchuk1975} and then \cite{Strang} extended this idea for complex physical processes (for instance, diffusion in the chemically reacting gas, or the advection-diffusion problem). In addition to (or instead of) splitting on spatial coordinates, they also proposed splitting the equation into physical processes that differ in nature, for instance, convection and diffusion. This idea becomes especially efficient if the characteristic times of evolution (relaxation time) of such processes are significantly different.

For a general approach to splitting techniques for {\it linear} operators using Lie algebras, we refer the reader to \cite{LanserVerwer}. Consider an equation
\begin{equation} \label{operEq}
\fp{f({\bf x},t)}{t} = \mathcal{L} f({\bf x},t)
\end{equation}
\noindent where $f({\bf x},t)$ is some function of independent variables ${\bf x},t, \ {\bf x} = x_1 ... x_k$, and $\mathcal{L}$ is some linear k-dimensional operator in ${\bf x}$ space.

Decomposing the total (compound)  operator $\mathcal{L}$ for problems of interest seems natural if, say, $\mathcal{L}$ can be represented as a sum of $k$ noncommuting linear operators $\sum_{i=1}^k \mathcal{L}_i$. In this case the operator equation \eqref{operEq}can be formally integrated via an operator exponential, i.e., \[ f({\bf x},t) = e^{t \mathcal{L}} f({\bf x}, 0)  = e^{ t \sum_{i=1}^k \mathcal{L}_i} f(0). \]
Due to the noncommuting property, the latter expression can be factorized into a product of operators
\[ f({\bf x},t) = e^{t \mathcal{L}_k} ... e^{t \mathcal{L }_1}f({\bf x},0).\]
This equation can then be solved in $N$ steps sequentially by the following procedure:
\begin{align*}
f^{(1)}({\bf x},t) &= e^{t \mathcal{L}_1}f({\bf x}, 0), \nonumber \\
f^{(2)}({\bf x},t) &= e^{t \mathcal{L}_2}f^{(1)}({\bf x},t), \nonumber \\
&\quad \vdots \\
f^{(k)}({\bf x},t) &= e^{t \mathcal{L}_k}f^{(k-1)}({\bf x},t), \nonumber \\
f({\bf x}, t) &= f^{(k)}({\bf x},t). \nonumber
\end{align*}

This algorithm is exact (no bias) if all the operators commute. If, however, they do not commute, the above algorithm provides only a first-order approximation in time (i.e., $O(t)$) to the exact solution.

To get the second-order splitting for noncommuting operators, Strang proposed a new scheme, which in the simplest case ($k=2$) is (\cite{Strang})
\begin{equation}
f({\bf x},t) = e^{t L} f({\bf x},0) = e^{ t (L_1 + L_2)} f({\bf x},0) = e^{ \frac{t}{2} L_1 } e^{t L_2} e^{ \frac{t}{2} L_1 } f({\bf x},0) + O(t^2).
\end{equation}

For parabolic equations with constant coefficients, this composite algorithm is second-order accurate in $t$ provided the numerical procedure that solves a corresponding equation at each splitting step is at least second-order accurate.

The above analysis, however, cannot be directly applied to our problem, because after transformation \eqref{transform} is applied, the jump integral transforms to a non-linear operator \eqref{intGen}.
For {\it non-linear} operators, the situation is more delicate. As shown in \cite{ThalhammerKoch2010}, the theoretical analysis of the nonlinear initial value problem
\[ u'(t) = F(u(t)), \qquad 0 \le t \le T \]
for a Banach-space-valued function $u: [0,T] \rightarrow X$ given an initial condition $u(0)$ could be done using calculus of Lie derivatives. A formal linear representation of the exact solution is
\[ u(t) = \mathcal{E}_F(t,u(0)) = e^{t D_F} u(0), \qquad 0 \le t \le T, \]
where the evolution operator and Lie derivatives are given by
\begin{align*}
e^{t D_F} v &= \mathcal{E}_F(t,v), \quad e^{t D_F} G v = G(\mathcal{E}_F(t,v)), \quad 0 \le t \le T, \\
D_F v &= F(v), \quad D_F G v = G'(v) F(v)
\end{align*}
for an unbounded nonlinear operator $G: D(G) \subset X \rightarrow X$. Using this formalism, \cite{ThalhammerKoch2010} showed that Strang's  second-order splitting method remains unchanged in the case of nonlinear operators.

Using this result for \eqref{oper} gives rise to the following numerical scheme:
\begin{align} \label{splitFin}
C^{(1)}(x,\tau) &= e^{\frac{\Delta \tau}{2} \mathcal{D} } C(x,\tau), \\
C^{(2)}(x,\tau) &= e^{\Delta \tau \mathcal{J}} C^{(1)}(x,\tau) \nonumber, \\
C(x,\tau+\Delta \tau) &= e^{\frac{\Delta \tau}{2} \mathcal{D} } C^{(2)}(x,\tau).  \nonumber
\end{align}

Thus, instead of an unsteady PIDE, we obtain one PIDE with no drift and diffusion (the second equation in \eqref{splitFin}) and two unsteady PDEs (the first and third ones in \eqref{splitFin}).

In what follows, we consider how to efficiently solve the second equation, while assuming that the solution of the first and the third equations can be obtained using any finite-difference method that is sufficiently efficient. To this end, in various examples given in the next sections we will explicitly mention what particular method was used for this purpose.

In this paper, we do not discuss the uniqueness and existence of the solution for the PIDE; to do so would move us to the definition of a viscosity solution for this class of integro-differential equations. For more details, see \cite{ContTankov} and \cite{Arisawa2005}.

Lastly, let us mention that $\mathcal{J} = \phi(-i \partial_x)$, where $\phi(u)$ is the characteristic exponent of the jump process. This directly follows from the L{\'e}vy-Khinchine theorem.

\section{Solution of a Pure Jump Equation} \label{Sec4}
We begin with the following observation. By definition of the jump generator $\mathcal{J}$, under some mild constraints on its existence, $\mathcal{J}$ could be viewed as a function of the operator $\partial_x$. Therefore, solving the integral (second) equation in \eqref{splitFin} requires a few steps.

First, an appropriate discrete grid ${\bf G}(x)$ has to be constructed in the truncated (originally infinite) space domain. This grid could be nonuniform. An important point is that in the space domain where the parabolic equations of \eqref{splitFin} are defined, this grid should coincide with the finite-difference grid constructed for the solution of these parabolic equations.\footnote{So the PIDE grid is a superset of the PDE grid.} This is to avoid interpolation of the solution that is obtained on the jump grid (the second step of the splitting algorithm) to the diffusion grid that is constructed to obtain solutions at the first and third splitting steps.

To make this transparent, let the parabolic equation be solved at the space domain $[x_0,x_k],$ $\ x_0 > -\infty$, $x_k < \infty$ using a nonuniform grid with $k+1$ nodes ($x_0, x_1,...,x_k$) and space steps $h_1 = x_1-x_0, ..., h_k = x_k - x_{k-1}$. The particular choice of $x_0$ and $x_k$ is determined by the problem under consideration. We certainly want $|x_0|$ and  $|x_k|$ not to be too large.  The integration limits of $\mathcal{J}$ in \eqref{intGen} are, however,  plus and minus infinity. Truncation of these limits usually is done to fit memory and performance requirements. On the other hand, we want a fine grid close to the option strike $K$ for better accuracy. Therefore, a reasonable way to construct a jump grid is as follows. For $x_0 \le x \le x_k$, the jump grid coincides with the grid used for solution of the parabolic PDEs. Outside of this domain, the grid is expanded by adding nonuniform steps; i.e., the entire jump grid is $x_{-K}, x_{1-K}, ... x_{-1}, x_0, x_1, ..., x_k, x_{k+1}, ..., x_{k+M}$. Here $K >0, \ M>0$ are some integer numbers that are chosen based on our preferences. Since contribution to $\mathcal{J}$  from very large values of $x$ is negligible, the outer grid points $x_{-K}, x_{1-K}, ... x_{-1}$ and $x_{k+1}, ..., x_{k+M}$ can be made highly nonuniform. One possible algorithm could be to have the steps of these grids be a geometric progression. This allows one to cover the truncated infinite interval with a reasonably small number of nodes.

Second, the discretization of $\partial_x$ should be chosen on ${\bf G}(x)$. We want this discretization to:
\begin{enumerate}
\item Provide the necessary order of approximation of the whole operator $\mathcal{J}$ in space.
\item Provide unconditional stability of the solution of the second equation in \eqref{splitFin}.
\item Provide positivity of the solution.
\end{enumerate}
Let  $\Delta_x$ denote a discrete analog of $\partial_x$ obtained by discretization of $\partial_x$ on the grid ${\bf G}(x)$. Accordingly, let us define the matrix $J(\Delta_x)$ to be the discrete analog of the operator $\mathcal{J}$ on the grid ${\bf G}(x)$.
The following proposition translates the above requirements to the conditions on $J(\Delta_x)$.
\begin{proposition} \label{prop0}
The finite-difference scheme
\begin{equation} \label{fd0}
C(x, \tau + \Delta \tau) = e^{\Delta \tau J(\Delta_x)} C(x,\tau)
\end{equation}
is unconditionally stable in time $\tau$ and preserves positivity of the vector $C(x,\tau)$ if there exists an M-matrix $B$ such that $J(\Delta_x) = - B$.
\end{proposition}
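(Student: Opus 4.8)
The plan is to prove the two assertions — preservation of positivity and unconditional stability — separately, routing both through the structural consequences of $B$ being an M-matrix. The single fact I would lean on is that $-B$ is then a \emph{Metzler} matrix (nonnegative off-diagonal entries) and, as such, is \emph{exponentially nonnegative}: its matrix exponential has only nonnegative entries. I would first record the two characterizations of an M-matrix that the two halves of the proof will use: $B$ can be written as $B = sI - A$ with $A \ge 0$ entrywise and $s \ge \rho(A)$; and, equivalently, $B$ has nonpositive off-diagonal entries while every eigenvalue of $B$ has strictly positive real part.

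For positivity, set $J = J(\Delta_x) = -B$. Since the off-diagonal entries of $B$ are nonpositive, those of $J$ are nonnegative, so $J$ is Metzler. I would then pick a scalar $c \le \min_i J_{ii}$ and write $J = cI + N$ with $N = J - cI \ge 0$ entrywise. As $cI$ and $N$ commute,
\[ e^{\Delta\tau J} = e^{c\Delta\tau}\, e^{\Delta\tau N} = e^{c\Delta\tau} \sum_{n=0}^{\infty} \frac{(\Delta\tau)^n}{n!}\, N^n, \]
and every term on the right is entrywise nonnegative (products and sums of nonnegative matrices, times a positive scalar). Hence $e^{\Delta\tau J} \ge 0$ entrywise for every $\Delta\tau > 0$, so the update $C(x,\tau+\Delta\tau) = e^{\Delta\tau J} C(x,\tau)$ carries a nonnegative vector to a nonnegative vector. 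The representation $B = sI - A$ yields the same conclusion directly, with $c = -s$ and $N = A$.

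For stability I would examine the amplification matrix $G = e^{\Delta\tau J}$. Its eigenvalues are $e^{\Delta\tau\lambda}$, where $\lambda$ ranges over the eigenvalues of $J = -B$. By the M-matrix property every eigenvalue $\mu$ of $B$ satisfies $\mathrm{Re}\,\mu > 0$, so each $\lambda = -\mu$ has $\mathrm{Re}\,\lambda < 0$ and therefore $|e^{\Delta\tau\lambda}| = e^{\Delta\tau\,\mathrm{Re}\,\lambda} < 1$. Consequently $\rho(G) < 1$ for \emph{every} $\Delta\tau > 0$, with no restriction linking $\Delta\tau$ to the spatial grid — which is exactly unconditional stability in the von Neumann sense.

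The step I expect to need the most care is upgrading the spectral statement $\rho(G) < 1$ to a genuine norm bound, since $J$ need not be normal and $\rho(G) \le \|G\|$ holds only in one direction. I would close this gap either by invoking the standard decay estimate $\|e^{tJ}\| \le p(t)\, e^{\alpha t}$, where $\alpha < 0$ is the spectral abscissa of $J$ and $p$ is a polynomial of degree below the size of the largest Jordan block, so that the iterates $G^n = e^{n\Delta\tau J}$ stay uniformly bounded and in fact decay; or, more cleanly, by exhibiting a (weighted) vector norm whose associated logarithmic norm of $J$ is nonpositive, giving $\|e^{tJ}\| \le 1$ outright. For M-matrices such a weighting can be built from the positive Perron-type eigenvector of $A$, and either route completes the argument.
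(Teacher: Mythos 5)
Your proposal follows essentially the same route as the paper: positivity via the observation that $J=-B$ is Metzler and the exponential of a Metzler matrix is entrywise nonnegative, and stability via the fact that the eigenvalues of an M-matrix have positive real part, so those of $J$ have negative real part. Your positivity half is the paper's argument with the details filled in (the paper simply cites Berman--Plemmons for exponential nonnegativity, whereas you prove it by the standard splitting $J = cI + N$, $N \ge 0$).

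Where you genuinely add something is in the stability half. The paper passes directly from ``all eigenvalues of $J(\Delta_x)$ have negative real part'' to ``$\|e^{\Delta\tau J(\Delta_x)}\| < 1$ in the spectral norm,'' which, as you correctly flag, is not a valid inference for non-normal matrices: negative spectral abscissa gives $\rho\bigl(e^{\Delta\tau J}\bigr) < 1$, but the operator $2$-norm of the exponential can exceed $1$ (transient growth), and the discretizations $J(\Delta_x)$ arising later in the paper are upper or lower triangular, hence decidedly non-normal. Your two proposed repairs --- the bound $\|e^{tJ}\| \le p(t)\,e^{\alpha t}$ with $\alpha<0$, which gives uniform boundedness and eventual decay of the iterates $G^n = e^{n\Delta\tau J}$, or the construction of a weighted norm in which the logarithmic norm of $J$ is nonpositive so that $\|e^{tJ}\|\le 1$ outright --- are both standard and both legitimately close this gap; the second is the cleaner statement of unconditional stability. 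So your proof is not merely equivalent to the paper's: on the stability assertion it is strictly more careful, and it patches a step the paper states too strongly.
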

\begin{proof}
By definition of an M-matrix (see \cite{BermanPlemmons1994}), the class of M-matrices contains those matrices whose off-diagonal entries are less than or equal to zero, while all diagonal elements are positive. All eigenvalues of an M-matrix have a positive real part. Therefore, if $B$ is an M-matrix, all eigenvalues of $J(\Delta_x)$  have a negative real part. Therefore, $\|e^{\Delta \tau J(\Delta_x)}\| < 1$ (in the spectral norm), and thus the scheme \eqref{fd0} is unconditionally stable.

Now since $B$ is an M-matrix, $J$ is a Metzler matrix (\cite{BermanPlemmons1994}). An exponential function of the Metzler matrix is a positive matrix. Therefore, if  $C(x,\tau)$ is positive, the scheme \eqref{fd0} preserves the  positivity of $C(x,\tau + \Delta \tau)$. $\blacksquare$
\end{proof}

This proposition gives us a recipe for the construction of the appropriate discretization of the operator $\mathcal{J}$. In the next section, we will give some explicit examples of this approach.

Once the discretization is performed, all we need is  to compute a matrix exponential $ e^{\Delta \tau J(\Delta_x)}$, and then a product of this exponential with $C(x,\tau)$. The following facts make this method competitive with those briefly described in the introduction. We need to take into account that:
\begin{enumerate}
\item The matrix $J(\Delta_x)$ can be precomputed once the finite-difference grid ${\bf G}(x)$ has been built.

\item If a constant time step is used for computations, the  matrix $\mathcal{A} = e^{\Delta \tau J(\Delta_x)}$ can also be precomputed.
\end{enumerate}

If the above two statements are true, the second splitting step results in computing a product of a matrix with time-independent entries and a vector. The complexity of this operation is $O(N^2)$, assuming the matrix $\mathcal{A}$ is $N \times N$, and the vector is $N \times 1$. However, $N$ in this case is relatively small (see below). One can compare this with the FFT algorithm proposed in \cite{AA2000} to compute the correlation integral. This translates into computation of two matrix-by-vector products. This algorithm is $2 c \times O(N \log_2 N)$, where $c$ is some coefficient. However, $N$ is relatively high in this case. Typical values are $N = 4096$. Also a post-solution interpolation is required.\footnote{In more advanced approaches, this step could be eliminated; see \cite{Parrot2009}. Also if coefficients of the linear interpolation are pre-computed, overhead of performance for doing interpolation is relatively small, see \cite{Halluin2004,Halluin2005a}.} Finally, for some models (CGMY, VG), the computation of the integral in a neighborhood of $x=0$ requires special treatment (\cite{ContVolchkova2003}).

To make a numerical estimate, assume that we want to compute an option price on the grid with the local accuracy $O(h_i h_{i+1})$ where the option log-strike lies within the interval $[x_i, x_{i+1}]$ of the non-uniform grid ${\bf G}(x)$. Also suppose that the jump integral is truncated from the above at $x_{max} = \log S_{max} = 12.566$ (this approximately corresponds to a choice of the step size in the Fourier space $\eta=0.25$, see \cite{CarrMadan:98}).
Finally assume that the inverse Fourier transform integral is approximated by the Trapezoid rule, which provides same accuracy, e.g., $O(\lambda_1^2))$, where $\lambda_1$ is the step of integration in the log-strike space. Then to make a local error of both methods to be of the same order we need to set $\lambda_1 = 2 b/N$. For $h_i = 0.006$ this gives $N=4096$. On the other hand, a non-uniform grid ${\bf G}(x)$ for computing $J(\Delta_x)$ can be easily constructed, see e.g., \cite{HoutFoulon2010}, that has the same local step $h_i$ close to the strike, and ends up at $x_{max}$, while the total number of grid points $N$ is about 100-200. Therefore, in this case this method is able to outperform FFT. Further improvements, for instance, using the Simpson's rule for integration could be done in favor of the FFT approach. However, various non-uniform grids can also be used in our approach to reduce the number of nodes. Therefore, both methods seem to be comparable in performance. We demonstrate this below when presenting some numerical examples.

At the very least the product $\mathcal{A} C(x,\tau)$ can be computed also using FFT, if at every time step one re-interpolates values from ${\bf G}(x)$ to the FFT grid, similar to how this was done in \cite{Halluin2004}. The advantage of our method then is that it doesn't use Picard iterations to provide the second order approximation in space, which give some gain in performance as compared with the method of \cite{Halluin2004}. Also it is known that the latter method for the CGMY model experiences some problems when parameter $\alpha$ of the model is close to 2, while our method seems to be insensitive to that.

The above consideration is sufficiently general in the sense that it covers any particular jump model where jumps are modeled as an exponential \LY process. Clearly, as we already mentioned in Introduction, for some models computation of the jump integral can be readily simplified, for instance for the Merton's model, thus demonstrating a better performance than a more general approach.

\section{Examples for Some Popular Models}
In this section, we review some popular jump models known in the financial literature. Given a model, our goal is to construct a finite-difference scheme, first for $\Delta_x$, and then for $J(\Delta_x)$, that satisfies the conditions of Proposition~\ref{prop0}. We want to underline that we discuss these jump models being a part of a more general either LV or LSV model with jumps. Otherwise, as characteristic functions of the original Merton, Kou and CGMY models are known, any FFT based method would be more efficient in, e.g.,  obtaining prices of European vanilla options.

\subsection{Merton Model}
\cite{merton:76} considered jumps that are normally distributed with the \LY density
\begin{equation} \label{MertonDensity}
\nu(dx) = \lambda \frac{1}{\sqrt{2 \pi}\sigma_J} \exp \left[ -\frac{(x-\mu_J)^2}{2 \sigma^2_J}\right] dx,
\end{equation}
\noindent where $\lambda$, $\mu_J$ and $\sigma_J$ are parameters of the model. Considering the pure jump part of the Merton model, one can see that it exhibits finite activity, i.e., a finite number of jumps within any finite time interval. Plugging \eqref{MertonDensity} into the definition of the operator $\mathcal{J}$ in \eqref{intGen} and fulfilling a formal integration gives
\begin{equation} \label{JMerton}
\mathcal{J} = \lambda\left(e^{\mu_J \triangledown +\frac{1}{2} \sigma _J^2 \triangledown^2} -
\kappa \triangledown -1 \right), \qquad \kappa = e^{\mu _J+\frac{\sigma _J^2}{2}} - 1,
\end{equation}
\noindent where $\triangledown \equiv \partial /\partial x$,  $\triangledown^2 \equiv \partial^2 /\partial x^2$. The corresponding evolutionary pure jump equation to be solved is
\begin{equation} \label{EqMerton}
C^{(2)}(x,\tau) = \mathcal{A} C^{(1)}(x,\tau), \qquad
\mathcal{A} = \exp \left[ \lambda \Delta \tau \left(e^{\mu_J \triangledown +\frac{1}{2} \sigma _J^2
\triangledown^2} - \kappa \triangledown -1 \right) \right].
\end{equation}

A matrix exponential method for this model with the exponential operator\footnote{It is actually a double exponential operator.} as in \eqref{EqMerton} has already been  considered in \cite{Tangman2011} using a different derivation (from \cite{CarrMayo}). They also discuss in more detail various modern methods for computing the matrix exponentials.

Recall that the diffusion equations in \eqref{splitFin} have to be solved up to some order of approximation in time $\tau$. Suppose for this purpose we want to use a finite-difference scheme that provides a second order approximation, $O((\Delta \tau^2))$. However, \eqref{EqMerton} gives an {\it exact} solution of the corresponding pure jump equation (the second step in Strang's splitting scheme). Since Strang's scheme guarantees only second-order accuracy ($O((\Delta \tau)^2)$) to the exact solution of the full PIDE, the second step could be computed to the same order of accuracy.

To this end we can use the (1,1) P{\'a}de approximation of $e^{\Delta \tau \mathcal{J}}$,
\begin{equation} \label{mer1}
e^{\Delta \tau \mathcal{J}} \approx [1 - \frac{1}{2}\Delta \tau \mathcal{J}]^{-1}[1 + \frac{1}{2}\Delta \tau \mathcal{J}] + O(\Delta \tau^3).
\end{equation}
Now the product
\[ \mathcal{J} C^{(1)}(x,\tau) = - \lambda (\kappa \triangledown + 1) C^{(1)}(x,\tau) + \lambda e^{\mu_J \triangledown +\frac{1}{2} \sigma _J^2 \triangledown^2}
C^{(1)}(x,\tau) \]
can be efficiently computed if one observes that:
\begin{itemize}
\item Merton's jumps are that with finite variation and finite activity. Therefore, the term $- \lambda \kappa \triangledown C^{(1)}(x,\tau)$ could be taken out of the jump integral and added to the diffusion operator (see our splitting algorithm, \eqref{splitFin}). We will denote the remaining part of the integral as $\mathcal{J}^*$, e.g.,
\[ \mathcal{J}^* C^{(1)}(x,\tau) = \lambda \left[ -1 + e^{\mu_J \triangledown +\frac{1}{2} \sigma _J^2 \triangledown^2} \right] C^{(1)}(x,\tau) \]
\item Vector
\[ z(x,\tau) \equiv e^{\mu_J \triangledown +\frac{1}{2} \sigma _J^2 \triangledown^2} C^{(1)}(x,\tau) \]
 is a solution of
 \begin{equation} \label{heat}
\fp{z(x,s)}{s} = \left(\mu_J \triangledown +\frac{1}{2} \sigma _J^2 \triangledown^2\right) z(x,s).
\end{equation}
for $0 \le s \le 1$ and $z(x,0) = C^{(1)}(x,\tau)$. A straightforward approach proposed in \cite{CarrMayo} suggests to use, e.g., finite difference scheme to solve this equation. The solution should be obtained at the same grid in space with a space step $h$, while the "time" step $\Delta s$ could be arbitrary chosen. However, since the total accuracy of this solution should not be worse that the required accuracy of the whole method, e.g., $O(\Delta \tau^2 + h^2)$, this dictates that $\Delta s \le \max(h, \Delta \tau)$. Therefore, the total complexity of such the solution is $O(N M), \ M = 1/\Delta s$.

However, this result could be improved. Indeed, suppose we compute an European option price\footnote{The below approach is also applicable to single-barrier options, or to the options with a non-vanilla payoff, e.g., digitals.}. As coefficients  $\mu_J, \sigma _J$ are assumed to be constant, the Green's function of \eqref{heat} is Gaussian. Therefore, the solution of \eqref{heat} given a vector of the initial prices is a convolution of this vector with the Gaussian kernel, and it can be computed by using a Fast Gaussian Transform (FGT).

Since our problem is one-dimensional computation of the low-dimensional FGT does not pose
any difficulties if we use a powerful algorithm known as Improved Fast Gauss Transform (IFGT), see
\cite{IFGT}. The number of target points in this case is equal to the number of source points $N$, and, therefore, the total complexity of IFGT is $O(2N)$.

\item The scheme \eqref{fd0} with allowance for \eqref{mer1} can be re-written as
\[ C^{(1)}(x, \tau + \Delta \tau) - C^{(1)}(x, \tau) = \frac{1}{2} \Delta \tau \mathcal{J}^* \left[C^{(1)}(x, \tau + \Delta \tau) + C^{(1)}(x, \tau)\right], \]
\noindent and this equation could be solved using the Picard iterations having in mind that at each iteration vector $z(x,t)$ could be obtained by solving \eqref{heat}.
\end{itemize}
In other words, we presented another derivation of the method first proposed in \cite{CarrMayo}\footnote{With a proposed improvement that reduces the total complexity of the method from $O(N /\Delta s)$ to $O(N)$.}. Notice, that to be unconditionally stable this method requires $B \equiv \mu_J \triangledown +\frac{1}{2} \sigma _J^2 \triangledown^2$ to be a Metzler matrix. Then $e^B$ is a positive matrix with all positive eigenvalues less than 1 in value. Accordingly,  $ \mathcal{J}^* = \lambda(-I + e^B)$ is a Metzler matrix with all negative eigenvalues. Then $|B_1^{-1} B_2| < 1$, where $B_1 = I + \frac{1}{2} \Delta \tau \mathcal{J}^*,  \ B_2 = I - \frac{1}{2} \Delta \tau \mathcal{J}^*$, and $I$ is an identity matrix.

\subsection{Kou Model}
The Kou model, proposed in \cite{Kou2004}, is a double exponential jump model. Its \LY density is
\begin{equation} \label{Kou}
\nu (dx) = \lambda\left[ p \theta_1 e^{-\theta_1 x} {\bf 1}_{x \ge 0} + (1-p) \theta_2 e^{\theta_2 x} {\bf 1}_{x < 0} \right] dx,
\end{equation}
where $\theta_1 > 1$, $\theta_2 > 0$, $1 > p > 0$; the first condition was imposed to ensure that the stock price $S(t)$ has finite expectation. Using this density in the definition of the operator $\mathcal{J}$ in \eqref{intGen} and carrying out the integration (recalling that we treat $\partial/ \partial x$ as a constant) gives
\begin{align} \label{KouJ}
\mathcal{J} &= \lambda  \left[- 1 + \mu_0 \triangledown + p \theta_1(\theta_1-\triangledown)^{-1} + (1-p) \theta_2(\triangledown+\theta_2)^{-1}\right], \\
\triangledown &\equiv \partial_x, \qquad \mu_0 = \frac{p}{\theta _1-1} - \frac{1-p}{1+\theta _2}, \quad -\theta_2 <  Re(\triangledown) < \theta_1. \nonumber
\end{align}
The inequality $-\theta_2 <  Re(\triangledown) < \theta_1$ is an existence condition for the integral defining $\mathcal{J}$ and should be treated as follows: the discretization of the operator $\triangledown$ should be such that all eigenvalues of matrix $A$, a discrete analog of $\triangledown$, obey this condition.

Also for the future let us remind that $\lambda$ is a parameter (intensity) of the Poison process, therefore $\lambda > 0$.

We proceed in a similar to Merton's model way by using again the (1,1) P{\'a}de approximation of $e^{\Delta \tau \mathcal{J}}$. As Kou's jumps are that with finite variation and finite activity, the term $\lambda  \mu_0 \triangledown$ could be taken out of the jump integral and added to the diffusion operator (see our splitting algorithm, \eqref{splitFin}). Now the whole product $\mathcal{J}^* C^{(1)}(x,\tau)$ with \[ \mathcal{J}^*  \equiv -1+ p \theta_1 (\theta_1-\triangledown)^{-1} + (1-p)  \theta_2(\triangledown+\theta_2)^{-1} \]
\noindent could be calculated as follows.
\paragraph{Second term.}
Observe that vector $z(x,\tau) = p \theta_1 (\theta_1-\triangledown)^{-1} C^{(1)}(x,\tau)$ solves the equation
\begin{equation} \label{kou1system}
(\theta_1-\triangledown) z(x,\tau) = p \theta_1 C^{(1)}(x,\tau)
\end{equation}
The lhs of this equation could be approximated to $O(h^2)$ using a {\it forward} one-sided derivative
$\triangledown f(x) = - [3 f(x) - 4 f(x+h) + f(x + 2 h)]/(2 h) + O(h^2)$, so on a given grid matrix $A^F_2$ with elements $-3/(2h)$ on the main diagonal, $2/h$ on the first upper diagonal, and $-1/(2h)$ on the second upper diagonal is a representation of $\triangledown$. Note, that the matrix $M_1 = \theta_1 I -A^F_2$ is not an M-matrix, however its inverse is a positive matrix if $h < 1/\theta_1$. Also since $M_1$ is an upper banded tridiagonal matrix, its eigenvalues are $\lambda_i = \theta_1 + 3/(2h), i=1,N$. Also under the condition $h < 1/\theta_1$ one has $|p \theta_1/\lambda_i| < 1$, i.e. this discretization is unconditionally stable in $h$ given the above condition is valid. Solving \eqref{kou1system} vector $z(x,\tau)$ can be found with the complexity $O(N)$.
\paragraph{Third term.}
Observe that vector $z(x,\tau) = (1-p) \theta_2 (\theta_2+\triangledown)^{-1} C^{(1)}(x,\tau)$ solves the equation
\begin{equation} \label{kou2system}
(\theta_2+\triangledown) z(x,\tau) = (1-p) \theta_2 C^{(1)}(x,\tau)
\end{equation}
The lhs of this equation could be approximated with $O(h^2)$ using a {\it backward} one-sided derivative
$\triangledown f(x) = [3 f(x) - 4 f(x-h) + f(x - 2 h)]/(2 h) + O(h^2)$, so on a given grid matrix $A^B_2$ with elements $3/(2h)$ on the main diagonal, $-2/h$ on the first lower diagonal, and $1/(2h)$ on the second lower diagonal is a representation of $\triangledown$. Note, that the matrix $M_2 = \theta_2 I + A^B_2$ is not an M-matrix, however its inverse is a positive matrix if $h < 1/\theta_2$. Also since $M_2$ is an lower banded tridiagonal matrix, its eigenvalues are $\lambda_i = \theta_2 + 3/(2h), i=1,N$. Also under the condition $h < 1/\theta_2$ one has $|(1-p)\theta_2/\lambda_i| < 1$, i.e. this discretization is unconditionally stable in $h$ given the above condition is valid. Solving \eqref{kou2system} vector $z(x,\tau)$ can be found with the complexity $O(N)$.

Overall, a discrete representation of $\mathcal{J}^*$ on the given grid constructed in such a way is a Metzler matrix, therefore all its eigenvalues have a negative real part. Indeed, all eigenvalues of the matrix $M_1^{-1}$ (here they are just the diagonal elements) are positive and less than 1, and all eigenvalues of the matrix $M_2^{-1}$ (also here they are just the diagonal elements) are positive and less than 1. Moreover, their sum is less than 1, and, therefore, the diagonal elements of matrix $\mathcal{J}^*$ are negative and less than 1.

Now by construction, it could be seen that matrices $M_1, M_2$ are strictly diagonal dominant, and, therefore, the off-diagonal elements of matrices $M_1^{-1}, M_2^{-1}$ are small as compared with that on the main diagonal. Therefore, by Gershgorin's circle theorem (\cite{GL83}) eigenvalues of $\mathcal{J}^* $ are $|\lambda_i| < 1, \ i=1,N$. Thus, the above described scheme is unconditionally stable provided $h < 1/\max(\theta_1, \theta_2)$, and at the same time gives the second order approximation $O(h^2 + \Delta \tau^2)$.

\paragraph{Numerical experiments}
Note, that aside of splitting technique and the way how to solve the diffusion equations at the first and third steps of Strang's splitting, our method differs from that in \cite{Halluin2004} only by how we compute a jump integral. Therefore, our numerical experiments aim to compare just that part and are organized as follows.

We consider a call option and take the Kou model parameters similar to \cite{Halluin2005b}, i.e., $S_0 = K = 100, r = 0.05, p = 0.0.3445, \theta_1 = 3.0465, \theta_2 = 3.0775, \sigma=0.15$. One step in time is computed by taking $T = \Delta \tau = 0.25$ (same as in \cite{Halluin2005b}). As $C^{(1)}(x, \Delta \tau)$ in the \eqref{splitFin} comes after the first step of splitting, we get it by using the Black-Scholes formula with the forward interest rate $r + \lambda \mu_0$ because in our splitting algorithm we moved the term $\lambda \mu_0 \triangledown$ from the jump part to the diffusion part (see above). At the second step the solution of the jump part $C^{(2)}_j(x,\Delta \tau)$ is produced given the initial condition $C^{(1)}(x,\Delta \tau)$ from the previous step. We compare our solution for the jump step with that obtained with $N = 409601$ which is assumed to be close to the exact value\footnote{This method is not very accurate. But as the exact solution is not known, it provides a plausible estimate of the convergence.}. The finite-difference grid was constructed as follows: the diffusion grid was taken from $x^D_{min} = 10^{-3}$ to $x^D_{max} = 30 max(S,K)$. The jump grid is a superset of the diffusion grid, i.e. it coincides with the diffusion grid at the diffusion domain and then extends this domain up to $x^J_{max} = \log (10^5)$. Here to simplify the convergence analysis we use an uniform grid with step $h$. However, non-uniform grid can be easily constructed as well, and, moreover, that is exactly what this algorithm was constructed for.

The results of such a test are given in Table~\ref{Tab1}. Here $C$ is the price in dollars, $N$ is the number of grid nodes, $t_e$ is the elapsed time\footnote{All experiments were computed in Matlab at Intel Pentium 4 CPU 3.2 Ghz under x86 Windows 7 OS. Obviously, C++ implementation provides a better performance by roughly factor 5.}, $\beta$ is the order of convergence of the scheme. The "exact" price obtained at $N= = 409601$ is $C_{num}(\Delta \tau)$ =  3.99544616155. It is seen that the convergence order $\beta_i
= \log_2 \frac{C(i) - C_{num}}{C(i+1)-C_{num}}, \ i=1,2...$ of the scheme is asymptotically close to $O(h^2)$.
\begin{table}[h!]
\begin{center}
\begin{tabular}{|c|l|r|l|r|}
\hline
$C$ & $h$ & $N$ & $t_e, \mbox{sec}$ & $\beta$ \cr
\hline
 4.08176114 & 0.149141 & 101   & 0.00807 & - \cr
 \hline
  4.00896884 & 0.0745706 & 201 & 0.00441 & 3.81602 \cr
  \hline
  3.99640628 & 0.0372853 & 401 & 0.00613 & 2.02002 \cr
  \hline
  3.99568288 & 0.0186427 & 801 & 0.00772 & 2.04431 \cr
  \hline
  3.99550355 & 0.00932133 & 1601 & 0.00829 & 1.93623 \cr
  \hline
  3.99546116 & 0.00466066 & 3201 & 0.01042 & 1.96681 \cr
  \hline
  3.99545000 & 0.00233033 & 6401 & 0.02305 & 1.97418 \cr
\hline
  3.99544714 & 0.00116517 & 12801 & 0.04445 & 1.97265 \cr
  \hline
  3.99544641 & 0.000582583 & 25601 & 0.10146 & 1.96645 \cr
\hline
  3.99544623 & 0.000291291 & 51201 & 0.22158 & 1.99185 \cr
\hline
  3.99544618 & 0.000145646 & 102401 & 0.53087 & 2.21324 \cr
\hline
\end{tabular}
\caption{Convergence of the proposed scheme for Kou's model, $T = \Delta \tau = 0.25$}
\label{Tab1}
\end{center}
\end{table}

As a sanity check we can compare this value with the reference value obtained by pricing this model (one step) using FFT, which is $C_{FFT}(\Delta \tau)$ = 3.97383, see, e.g., \cite{Halluin2005b}. Definitely $C_{FFT}(\Delta \tau)$ is not exactly equal to $C_{num}(\Delta \tau)$ because our two steps used in the test\footnote{Don't miss this with the accuracy of the whole 3 steps Strang's algorithm which is $O(\Delta \tau^2)$. The test validates just the convergence in $h$, not in $\Delta \tau$.} are equivalent to the splitting scheme of the first order in $\Delta \tau$, i.e. it has an error $O(\Delta \tau)$. And $\Delta \tau$ in this experiment is large. Therefore, we rerun this test taking now $T = \Delta \tau = 0.5$. This results are given in Tab.~\ref{Tab11}. Now $C_{FFT}(\Delta \tau)$ =  1.545675, and $C_{num}(\Delta \tau)$ = 1.544557, so the relative error is 0.07\%. This confirms that the value $C_{num}(\Delta \tau)$ looks reasonable.

\begin{table}[h!]
\begin{center}
\begin{tabular}{|c|l|r|l|r|}
\hline
$C$ & $h$ & $N$ & $t_e, \mbox{sec}$ & $\beta$ \cr
\hline
  1.96362542 & 0.149141 & 101 & 0.00819 & - \cr
\hline
  1.72184850 & 0.0745706 & 201 & 0.00387 & 5.00130 \cr
\hline
  1.55009251 & 0.0372853 & 401 & 0.00692 & 3.62747 \cr
\hline
  1.54500503 & 0.0186427 & 801 & 0.01647 & 4.78856 \cr
\hline
  1.54457335 & 0.00932133 & 1601 & 0.01135 & 1.94876 \cr
\hline
  1.54456134 & 0.00466066 & 3201 & 0.01330 & 2.05321 \cr
\hline
  1.54455816 & 0.00233033 & 6401 & 0.02666 & 2.07878 \cr
\hline
  1.54455739 & 0.00116517 & 12801 & 0.04160 & 1.98360 \cr
\hline
  1.54455721 & 0.000582583 & 25601 & 0.10207 & 1.99972 \cr
\hline
  1.54455716 & 0.000291291 & 51201 & 0.22687 & 2.05112 \cr
\hline
  1.54455715 & 0.000145646 & 102401 & 0.77614 & 2.29722 \cr
\hline
\end{tabular}
\caption{Convergence of the proposed scheme for Kou's model, $T = \Delta \tau = 0.05$.}
\label{Tab11}
\end{center}
\end{table}

Performance-wise the similarity of this method to that in \cite{Halluin2004} is that it also requires Picard's iterations at every time step. In contrast to \cite{Halluin2004} at every iteration this method requires solution of two linear systems with a tridiagonal (one upper and one lower triangular) matrix, i.e. its complexity is $O(N)$. In \cite{Halluin2004} it requires two FFT provided on a slightly extended grid to avoid wrap-around effects, so the total complexity is at least $O(N \log_2 N)$. Therefore, even if $N$ in out method is chosen to be close to $N$ in the FFT approach the former is approximately $\log_2 N$ times faster.

\subsection{CGMY Model}
Computation of jump integrals under the CGMY model (also known as the KoBoL model, or more generally as generalized tempered stable processes (GTSPs)) was considered in detail in \cite{ItkinCarr2012Kinky} using a similar approach. GTSPs have probability densities symmetric in a neighborhood of the origin and exponentially decaying in the far tails. After this exponential softening, the small jumps keep their initial stable-like behavior, whereas the large jumps become exponentially tempered. The L\'evy measure of GTSPs is given by
\begin{equation}\label{measure}
    \mu(y) = \lambda_{L} \dfrac{e^{-\nu_{L}|y|}}{|y|^{1 + \alpha_{L}}}{\bf 1}_{y<0} + \lambda_{R} \dfrac{e^{-\nu_{R}|y|}}{|y|^{1 + \alpha_{R}}}{\bf 1}_{y>0},
\end{equation}
where $\nu_R$, $\nu_L > 0$, $\lambda_R$, $\lambda_L > 0$ and $\alpha_R, \alpha_L < 2$. The last condition is necessary to provide
\begin{equation}\label{cond}
    \int^1_{-1} y^2 \mu(dy) < \infty \, , \ \int_{|y| > 1} \mu(dy) < \infty.
\end{equation}

The next proposition follows directly from Proposition 7 of \cite{ItkinCarr2012Kinky}.\footnote{In Itkin and Carr's paper, jump integrals were defined on half-infinite positive and negative domains, while here they are defined on the whole infinite domain. Therefore, to prove this Proposition simply use $\int_{-\infty}^{\infty} = \int_{-\infty}^{0} + \int_{0}^{\infty}$ and then apply Proposition 7 from  \cite{ItkinCarr2012Kinky}}
\begin{proposition}
The PIDE
\begin{equation*}
\fp{}{\tau} C(x,\tau) = \int_{-\infty}^{\infty} \left[C(x+y,\tau) - C(x,\tau) - \fp{}{x} C(x,\tau) (e^y-1)  \right] \mu(y) dy
\end{equation*}
is equivalent to the PDE
\begin{align} \label{whole}
\fp{}{\tau} C(x,\tau) &= (\mathcal{L}_R + \mathcal{L}_L)C(x,\tau), \\
\mathcal{L}_R &= \lambda_R \Gamma(-\alpha_R) \left\{ \left(\nu_R - \triangledown\right)^{\alpha_R} - \nu_R^{\alpha_R} + \left[ \nu_R^{\alpha_R} - (\nu_R-1)^{\alpha_R}\right] \triangledown \right\}, \nn \\
& \alpha_R < 2, \ {Re}(\nu_R - \triangledown) > 0, \, \nu_R > 1, \nn \\
\mathcal{L}_L &= \lambda_L \Gamma(-\alpha_L) \left\{ \left(\nu_L + \triangledown\right)^{\alpha_L} - \nu_L^{\alpha_L} + \left[ \nu_L^{\alpha_L} - (\nu_L+1)^{\alpha_L}\right] \triangledown \right\}, \nn \\
& \alpha_L < 2, \ {Re}(\nu_L + \triangledown) > 0, \ \nu_L > 0, \nn
\end{align}
where $\Gamma$ is the gamma function, and $Re(L)$ for some operator $L$ formally refers to the spectrum of $L$. In other words, $Re(L) > 0$  means that real parts of all eigenvalues $\lambda$ of $L$ are positive.

In special cases, this equation changes to
\begin{align} \label{whole0}
\mathcal{L}_R  &= \lambda_R  \left\{ \log(\nu_R) - \log \left(\nu_R - \triangledown \right) + \log \left(\dfrac{\nu_R-1}{\nu_R}\right)\triangledown  \right\} \\
& \alpha_R = 0, \mathbb{R}(\nu_R - \triangledown) > 0, \mathbb{R}(\nu_R) > 1, \nn \\
\mathcal{L}_L  &= \lambda_L  \left\{\log(\nu_L)  - \log \left(\nu_L + \triangledown \right)  + \log \left(\dfrac{\nu_L+1}{\nu_L}\right)\triangledown  \right\} \nn \\
& \alpha_L = 0, \ \mathbb{R}(\nu_L + \triangledown) > 0, \ \mathbb{R}(\nu_L) > 0, \nn
\end{align}
and
\begin{align} \label{whole1}
\mathcal{L}_R  &=  \lambda_R \Big[ (\nu_R-\triangledown)\log (\nu_R-\triangledown) - \nu_R \log (\nu_R) + \triangledown \left(\log (\nu _R-1) - 2 \nu_R \coth^{-1} (1-2 \nu_R) \right)  \Big]  \nn \\
& \alpha_R = 1, \ Re(\nu_R - \triangledown) > 0, \ \nu_R > 1, \\
\mathcal{L}_L  &= \lambda_L \Big[ (\nu_L+\triangledown)\log \left(\frac{\nu_L+\triangledown}{\nu_L}\right) - \triangledown (1 + \nu_L)\log \left(\frac{\nu_L+1}{\nu_L}\right) \Big]  \nn \\
& \alpha_L = 1, \ Re(\nu_L + \triangledown) > 0, \ \nu_L > 0, \nn
\end{align}
where the logarithm of the differential operator is defined in the sense of \cite{logOfDif}.
\end{proposition}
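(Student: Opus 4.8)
The plan is to reduce the statement to the pseudo-differential representation of the jump generator derived earlier and then to the single master integral underlying the tempered-stable L\'evy exponent. Recall from \eqref{intGen} that, treating $\triangledown = \partial_x$ as a scalar $z$, the generator reads
\[ \mathcal{J} = \int_\mathbb{R}\left[ e^{y z} - 1 - (e^y-1) z \right]\mu(y)\,dy. \]
Since the measure \eqref{measure} splits into a left piece supported on $y<0$ and a right piece supported on $y>0$, I would write $\int_{-\infty}^{\infty} = \int_{-\infty}^{0} + \int_{0}^{\infty}$, obtaining $\mathcal{J} = \mathcal{L}_L + \mathcal{L}_R$ where each half-line contribution involves only one exponential weight. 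This is exactly the decomposition indicated in the footnote, so that each piece falls directly under Proposition 7 of \cite{ItkinCarr2012Kinky}.

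\textbf{The master integral.} Everything rests on the identity
\[ \int_0^\infty \frac{e^{-\nu y}}{y^{1+\alpha}}\left(e^{z y}-1\right)dy = \Gamma(-\alpha)\left[(\nu-z)^{\alpha}-\nu^{\alpha}\right], \]
valid for $0<\alpha<1$ and $\mathrm{Re}(\nu-z)>0$. For the right half-line I would apply this twice: once with $z=\triangledown$ to handle the shift term $e^{y\triangledown}-1$, and once with $z=1$ to handle the compensator $-(e^y-1)\triangledown$. Collecting the two contributions and multiplying by $\lambda_R$ reproduces $\mathcal{L}_R$ in \eqref{whole}, the linear coefficient $\nu_R^{\alpha_R}-(\nu_R-1)^{\alpha_R}$ arising precisely from the $z=1$ evaluation. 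The left half-line is then obtained by the reflection $y\mapsto -y$, which sends $\triangledown\mapsto -\triangledown$ and selects $(\nu_L,\alpha_L,\lambda_L)$; the same two evaluations give $\mathcal{L}_L$ with $(\nu_L+\triangledown)^{\alpha_L}$ in place of $(\nu_R-\triangledown)^{\alpha_R}$. The existence constraints $\mathrm{Re}(\nu_R-\triangledown)>0$ and $\mathrm{Re}(\nu_L+\triangledown)>0$ are simply the convergence conditions of the master integral, transcribed into spectral conditions on the discrete analog of $\triangledown$.

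\textbf{Extension to all $\alpha<2$ and the special cases.} To cover the full range $\alpha<2$ beyond the window $0<\alpha<1$ where the integral converges outright, I would appeal to analytic continuation: after the compensating subtractions built into the integrand, both sides of the master identity are analytic in $\alpha$ on $\{\alpha<2\}\setminus\{0,1\}$ and agree on an open subinterval, hence agree throughout by the identity theorem. The special cases $\alpha=0$ and $\alpha=1$ are then recovered as removable-singularity limits of \eqref{whole}. For $\alpha\to 0$ one uses $\Gamma(-\alpha)\sim -1/\alpha$ together with $x^{\alpha}=1+\alpha\log x+O(\alpha^2)$, so that $\Gamma(-\alpha)[(\nu-\triangledown)^{\alpha}-\nu^{\alpha}]\to \log\nu-\log(\nu-\triangledown)$ and the linear term tends to $\log\frac{\nu-1}{\nu}\,\triangledown$, producing \eqref{whole0}. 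The case $\alpha=1$ is the main obstacle: there the simple pole of $\Gamma$ meets a simple zero of the bracket, and expanding to next order throws off the $(\nu-\triangledown)\log(\nu-\triangledown)$ terms together with the inverse-hyperbolic coefficient $2\nu_R\coth^{-1}(1-2\nu_R)$ appearing in \eqref{whole1}. Getting the compensator's linear coefficient right requires careful bookkeeping of the $O(\alpha-1)$ contributions, but it is a mechanical l'H\^opital-type computation once the limit is organized.
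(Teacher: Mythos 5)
Your proposal is correct and takes essentially the same route as the paper: the paper's entire proof consists of splitting $\int_{-\infty}^{\infty} = \int_{-\infty}^{0} + \int_{0}^{\infty}$ and invoking Proposition 7 of \cite{ItkinCarr2012Kinky}, which is exactly your opening step. The only difference is that you additionally reconstruct the content of that cited result (the tempered-stable master integral, analytic continuation in $\alpha$ across $\{\alpha<2\}\setminus\{0,1\}$, and the $\alpha\to 0$ and $\alpha\to 1$ limits, whose coefficients you identify correctly, e.g. $\log(\nu_R-1)-2\nu_R\coth^{-1}(1-2\nu_R)=\nu_R\log\nu_R-(\nu_R-1)\log(\nu_R-1)$), whereas the paper delegates all of this to the reference.
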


We underline the existence conditions for the jump integrals to be well-defined which are
$\nu_L > 0, \ \nu_R > 1$. This is in some sense similar to the Kou's model where $\theta_1$ is defined on the domain $\theta_1 > 1$ while $\theta_2$ at the domain $\theta_2 > 0$.

There are a few ways to proceed in this case. First, one can use an extra Strang's splitting; instead of directly solving \eqref{whole}, solve it in three sweeps. At every step, only one operator, either  $\mathcal{L}_R$ or $\mathcal{L}_L$ enters the equation.  Thus, the construction of the appropriate discrete operator is simplified. The second approach is based on the observation that eigenvalues of the sum of two M-matrices are also positive. This result follows from Wayl's inequality (see \cite{Bellman}). Therefore, if every operator in the right-hand side of \eqref{whole} is represented by the negative of an M-matrix, the sum of those operators is also the negative of an M-matrix. However, the discretization of these operators, while on the same grid, could differ, thus adding some flexibility to the construction of the numerical scheme.

As shown in \cite{ItkinCarr2012Kinky}, the computation of the matrix exponential could be fully eliminated by using the following approach. First, they show that for $\alpha_I \in \mathbb{Z}$ the solution of the pure jump equation could be reduced to the solution of a system of linear equations where the matrix in the left-hand side of the system is banded. Therefore, the complexity of this solution is $O(N)$. Then to compute the matrix exponential for a real $\alpha$, first choose three closest values of $\alpha_I \in \mathbb{Z}$. Given the solutions at these $\alpha_I$, we can interpolate them to give the solution for $\alpha$. Therefore, if linear interpolation is used, and the interpolation coefficients are pre-computed, the total complexity of this solution is also $O(N)$.

This approach, however, does not work well if $0 < \alpha < 2$, since we do not have a solution at $\alpha = 2$.  To proceed in such a way would then require extrapolation  instead of interpolation. It is well known that extrapolation is not a reliable procedure, and so in what follows we apply the general approach of this paper to the GTSP models.

First, consider terms with $\alpha_R$. Based on the above analysis, the most important case for us is $1 < \alpha_R < 2$. That is because if we manage to propose an efficient numerical algorithm in such case, other domains of $\alpha_R$ could be treated as in \cite{ItkinCarr2012Kinky} by involving the value $1 < \alpha_R < 2$ into the interpolation procedure in \cite{ItkinCarr2012Kinky}. However, for the sake of completeness we begin with a relatively simple case $\alpha_R < 0$ and $0 < \alpha_R < 1$ to demonstrate our approach. A special case $\alpha_R = 0$ was already addressed in \cite{ItkinCarr2012Kinky}. A special case
$\alpha_R = 1$ is considered later in this paper.

\subsubsection{Case $\alpha_R < 0$.}
Define a one-sided {\it forward} discretization of $\triangledown$, which we denote as $A^F: \ \partial C/ \partial x = [C(x+h,t) - C(x,t)]/h$. Also define a one-sided {\it backward} discretization of $\triangledown$, denoted as $A^B: \ \partial C/ \partial x = [C(x,t) - C(x-h,t)]/h$.
\begin{proposition} \label{prop-0}
If $\alpha_R < 0$, then the discrete counterpart $L_R$ of the operator $\mathcal{L}_R$ is the negative of an M-matrix if
\[ L_R = \lambda_R \Gamma(-\alpha_R) \left\{ \left(\nu_R I - A^F\right)^{\alpha_R} - \nu_R^{\alpha_R} I + \left[ \nu_R^{\alpha_R} - (\nu_R-1)^{\alpha_R}\right] A^B \right\}.\]
The matrix $L_R$ is an $O(h)$ approximation  of the operator $\mathcal{L}_R$.
\end{proposition}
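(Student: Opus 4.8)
The plan is to verify the two assertions separately: that $B\equiv -L_R$ is an M-matrix, and that $L_R$ reproduces $\mathcal{L}_R$ to first order in $h$. Throughout, observe that the scalar prefactor $\lambda_R\Gamma(-\alpha_R)$ is strictly positive, since $\lambda_R>0$ and $\alpha_R<0$ gives $-\alpha_R>0$ and hence $\Gamma(-\alpha_R)>0$; the sign structure of $L_R$ is therefore governed entirely by the bracketed matrix. First I would record the sign patterns of the building blocks. On the grid $A^F$ carries $-1/h$ on the diagonal and $+1/h$ on the first upper diagonal, while $A^B$ carries $+1/h$ on the diagonal and $-1/h$ on the first lower diagonal. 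Consequently $M\equiv\nu_R I-A^F$ is upper bidiagonal with diagonal entries $\nu_R+1/h>0$ and upper off-diagonal $-1/h<0$; being triangular its eigenvalues are exactly $\nu_R+1/h>0$, so $M$ is itself a nonsingular M-matrix.

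The real content is the nonnegativity of the fractional power $(\nu_R I-A^F)^{\alpha_R}=M^{\alpha_R}$ for $\alpha_R<0$. Writing $s=-\alpha_R>0$ I would use the Laplace representation
\[ M^{-s}=\frac{1}{\Gamma(s)}\int_0^\infty t^{s-1}e^{-tM}\,dt, \]
valid because the spectrum of $M$ lies in the open right half-plane. Since $-M=A^F-\nu_R I$ has nonnegative off-diagonal entries, i.e. is a Metzler matrix, $e^{-tM}=e^{t(-M)}$ is entrywise nonnegative for every $t\ge 0$ (indeed upper triangular), and integrating against the nonnegative weight $t^{s-1}/\Gamma(s)$ keeps it nonnegative. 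Hence $M^{\alpha_R}\ge 0$. This is the discrete analogue of the argument behind Proposition~7 of \cite{ItkinCarr2012Kinky}, and the point is that $M^{\alpha_R}$ is a \emph{full} (upper-triangular) matrix, so its positivity cannot be read off entrywise and must be extracted from this representation.

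Next I would assemble the sign pattern of $L_R$. For the off-diagonal entries: $M^{\alpha_R}\ge 0$ contributes nonnegative upper off-diagonals, $-\nu_R^{\alpha_R}I$ contributes nothing off the diagonal, and the coefficient $c\equiv\nu_R^{\alpha_R}-(\nu_R-1)^{\alpha_R}$ is negative because $x\mapsto x^{\alpha_R}$ is decreasing for $\alpha_R<0$ while $\nu_R>\nu_R-1>0$ (here the existence condition $\nu_R>1$ enters), so $c\,A^B$ places $-c/h>0$ on the lower off-diagonal. Thus all off-diagonals of $L_R$ are $\ge 0$, i.e. $B=-L_R$ is a Z-matrix. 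To locate the diagonal I would test against the constant vector $\mathbf 1$: on interior rows $A^F\mathbf 1=A^B\mathbf 1=0$, so $M\mathbf 1=\nu_R\mathbf 1$ yields $M^{\alpha_R}\mathbf 1=\nu_R^{\alpha_R}\mathbf 1$, whence $L_R\mathbf 1=\lambda_R\Gamma(-\alpha_R)(\nu_R^{\alpha_R}-\nu_R^{\alpha_R})\mathbf 1=0$. Zero interior row sums together with nonnegative off-diagonals force the interior diagonal entries to be strictly negative, so $B$ has positive diagonal, nonpositive off-diagonals, and zero interior row sums; imposing the option boundary conditions breaks the exact balance and makes $B$ irreducibly diagonally dominant, so by Gershgorin its eigenvalues lie in the open right half-plane and $B$ is a genuine (nonsingular) M-matrix, giving $L_R=-B$.

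Finally, for the order of approximation I would expand the one-sided differences $A^F=\triangledown+O(h)$ and $A^B=\triangledown+O(h)$ and invoke continuity of the matrix functional calculus: through the integral representation $e^{-tM}$ depends smoothly on $A^F$, so $M^{\alpha_R}=(\nu_R-\triangledown)^{\alpha_R}+O(h)$, and collecting the three terms recovers $\mathcal{L}_R$ up to $O(h)$. I expect the main obstacle to be the combination of the second step and the boundary analysis: the nonnegativity of the dense power $M^{\alpha_R}$ rests entirely on the Laplace/Metzler representation rather than on any banded structure, and one must check with some care that the boundary closure upgrades the merely weakly dominant (otherwise singular) interior structure to the nonsingular M-matrix demanded by Proposition~\ref{prop0}.
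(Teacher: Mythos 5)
Your proof follows the same overall strategy as the paper's: show the bracketed matrix is Metzler by combining entrywise nonnegativity of the fractional power $M^{\alpha_R}$, $M=\nu_R I-A^F$, with the sign of $c=\nu_R^{\alpha_R}-(\nu_R-1)^{\alpha_R}<0$, then absorb the positive scalar $\lambda_R\Gamma(-\alpha_R)$. You diverge at the two technical steps. For the positivity of $M^{\alpha_R}$ the paper writes $M^{\alpha_R}=\exp(\alpha_R\log M)$ and argues that $\log M$ is an M-matrix, so $\alpha_R\log M$ is Metzler and its exponential is nonnegative; your Laplace representation $M^{-s}=\frac{1}{\Gamma(s)}\int_0^\infty t^{s-1}e^{-tM}\,dt$ together with $e^{-tM}=e^{-\nu_R t}e^{tA^F}\ge 0$ reaches the same conclusion while avoiding the unproved (citation-backed) claim that the logarithm of an M-matrix is again an M-matrix; this is a legitimate and arguably more self-contained route. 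You also attend to the spectral/nonsingularity half of the M-matrix definition, which the paper's proof never verifies beyond sign patterns -- a point in your favor.

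The gap is in your diagonal argument. From $A^F\mathbf 1=0$ \emph{on interior rows} you conclude $M^{\alpha_R}\mathbf 1=\nu_R^{\alpha_R}\mathbf 1$ on interior rows. Matrix functions are not row-local: $\mathbf 1$ is an eigenvector of the truncated $M$ only if \emph{every} row of $M$ sums to $\nu_R$, and the last row(s) violate this; since $M^{\alpha_R}$ is a full upper-triangular matrix, its interior rows reach the perturbed boundary columns, so the identity fails even in the interior. The conclusion happens to survive: the entries of row $i$ of $M^{\alpha_R}$ are the first $N-i+1$ terms of the positive Taylor series of $x^{\alpha_R}$ about $\nu_R+1/h$ evaluated at $\nu_R$, so each row sum is \emph{strictly less} than $\nu_R^{\alpha_R}$, and the row sums of $-L_R$ are strictly positive with no appeal to boundary closure at all -- truncation pushes the balance in the favorable direction. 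But as written the step is unjustified, and repairing it requires exactly this kind of series argument. The paper sidesteps the issue: because $M$ is upper bidiagonal with constant diagonal, $M^{\alpha_R}$ is upper triangular with diagonal entries exactly $(\nu_R+1/h)^{\alpha_R}$, and the explicit chain $(\nu_R+1/h)^{\alpha_R}-\nu_R^{\alpha_R}+c/h<c/h<0$ gives the negative diagonal directly, with no row sums and no boundary discussion. I would replace your row-sum paragraph with that one-line computation (or with the partial-sum argument above if you want strict diagonal dominance and hence nonsingularity as well).
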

\begin{proof}
We need eigenvalues of $A^F$ to be negative to obey the existence condition in \eqref{whole}. That dictates the choice of $A^F$ in the first term as $A^F$ is the Metzler matrix which eigenvalues are negative.
Now take into account that $\nu_R^{\alpha_R} - (\nu_R-1)^{\alpha_R} < 0$ if $\alpha_R < 0$,
while $\Gamma(-\alpha_R) > 0$. Matrix $M = \left(\nu_R I - A^F\right)$ is an M-matrix with all positive eigenvalues. Its power is a positive matrix because $M^{\alpha_R} = \exp(\alpha_R \log M)$, matrix $\log M$ is also an M-matrix, matrix $\alpha_R \log M$ is negative of an M-matrix, i.e. the Metzler matrix, and exponentiation of the Metzler matrix gives a positive matrix, see \cite{BermanPlemmons1994}). Matrix $M_1 =   - \nu_R^{\alpha_R} + \left[ \nu_R^{\alpha_R} - (\nu_R-1)^{\alpha_R}\right] A^B $ is bi-diagonal and also the Metzler matrix. Therefore, $M+M_1$ is the Metzler matrix, so is $L_R$. Now take into account that diagonal elements of $M$ are $d_i < (\nu_R + 1/h)^{\alpha_R}, \ i=1,N$, and diagonal elements of $M_1$ are
$d_{1,i} = [\nu_R^{\alpha_R} - (\nu_R-1)^{\alpha_R}]/h - \nu_R^{\alpha_R}, \ i=1,N$. Therefore,
\[ d_i + d_{1,i} <  \left(\nu_R + \frac{1}{h}\right)^{\alpha_R} + \frac{1}{h}[\nu_R^{\alpha_R} - (\nu_R-1)^{\alpha_R}] - \nu_R^{\alpha_R} <  \frac{1}{h}[\nu_R^{\alpha_R} - (\nu_R-1)^{\alpha_R}] < 0 \]
Thus, matrix $L_R$ is the negative of an M-matrix. First order approximation follows from the definition of $A^F$ and $A^B$. $\blacksquare$
\end{proof}

To get the second order of approximation we can use the following observations:
\begin{itemize}
\item Jumps with $\alpha_R < 0$ are of the finite activity and finite variation. Therefore, the term
$\left[ \nu_R^{\alpha_R} - (\nu_R-1)^{\alpha_R}\right] \triangledown$ could be moved to the diffusion part of our splitting algorithm;
\item The remaining operator could be approximated as
\[ L_R = \lambda_R \Gamma(-\alpha_R) \left\{ \left(\nu_R I - A^F_2\right)^{\alpha_R} - \nu_R^{\alpha_R}I \right\} + O(h^2) \]
\end{itemize}
The proof is almost exactly same as in the proposition~\ref{prop-0}, if one notices that despite
$M = \nu_R I - A^F_2$ is not exactly an M-matrix, its logarithm is an M-matrix. That is because $M$ is upper tridiagonal matrix which positive elements on the second upper diagonal in absolute value are small as compared with the elements of the main and first upper diagonals.

\paragraph{Numerical experiments} We organize this test in exactly same way as that was done for the Kou model. There are two ways to proceed. The first one is to pre-compute $\mathcal{J} = \exp(\Delta \tau L_R)$, and then at every time step of the splitting method when a corresponding jump equation has to be solved (or a jump integral has to be computed) to compute a product $\mathcal{J} C(x,\tau)$. This operation has the complexity $O(N^2)$, but it doesn't require Picard iterations to provide the second order approximation in $\tau$. Another approach would be to proceed in a sense of \cite{Halluin2004}, similar to what we did for the Merton and Kou models. The (1,1) P{\'a}de approximation of $e^{\Delta \tau \mathcal{J}}$ could be again re-written in the form of the implicit equation which could be solved by using Picard iterations (see above). Here, however, we don't have a fast way to compute a product $\mathcal{J} C(x,\tau)$, so FFT could be used for this purpose. From this prospective, this method is similar to \cite{Halluin2004}, the difference is in the matrix $L_R$. We, however, remind the reader, that the method of \cite{ItkinCarr2012Kinky} is more efficient in this case.

In Tab.~\ref{tab2} the results of such a test for a call option are given assuming the following values of parameters: $\alpha_R = -0.5, \lambda_R = 10, \nu_R = 2, S_0 = K = 1, r = 0, \sigma = 0.2, T = 0.1$. The grid was constructed exactly in the same way as in the test for Kou's model.
in Table~\ref{tab2} $C_{it}$ is the price in cents obtained by using Picard iterations,
$C_{exp}$ is the price in cents obtained by using matrix exponential,
$N$ is the number of grid nodes, $\beta_{it}$ is the order of convergence of the iterative scheme,
$\beta_{exp}$ is the order of convergence of the exponential scheme. The "exact" price obtained at $N=4000$ is $C_{it}(\Delta \tau) = $ 40.2261 cents, and $C_{exp}(\Delta \tau) = $ 39.223 cents. It is seen that the convergence order $\beta$ of both schemes is close to $O(h^2)$.

\begin{table}[h!]
\begin{center}
\begin{tabular}{|c|c|c|c|c|c|}
\hline
$C_{it}$ & $h$ & $N$ & $\beta_{it}$ & $C_{exp}$ & $\beta_{exp}$ \cr
\hline
40.1100 & 0.104131 & 100 & - & 39.1027 & - \cr
\hline
40.2002 & 0.051804 & 200 & 2.16 & 39.1937 & 2.03\cr
\hline
40.2223 & 0.025837 & 400 & 2.86 & 39.2167 & 2.26\cr
\hline
40.2260 & 0.012902 & 800 & 4.12 & 39.2216 & 2.44\cr
\hline
40.2258 & 0.006447 & 1600 & 1.58 & 39.2222 & 1.14 \cr
\hline
\end{tabular}
\caption{Convergence of the proposed scheme for CGMY model with $\alpha_R = -0.5$.}
\label{tab2}
\end{center}
\end{table}

At high $N$ the convergence ratio drops down most likely because computation of the matrix exponent, or matrix power loses accuracy, see \cite{Moler2003}.

\subsubsection{Case $0 < \alpha_R < 1$.}
This case is similar to the previous one.
\begin{proposition} \label{alpha01}
Suppose $0 < \alpha_R < 1$ (so jumps are of the infinite activity but finite variation) and consider the following discrete approximation of the operator  $\mathcal{L}_R$:
\[\lambda_R \Gamma(-\alpha_R) \left\{ \left(\nu_R I - A^F\right)^{\alpha_R} - \nu_R^{\alpha_R}I + \left[ \nu_R^{\alpha_R} - (\nu_R-1)^{\alpha_R}\right] \triangledown \right\}.\]
Because of the finite variation of the jumps the last terms in this representation could be taken out and moved to the diffusion part (that is what we did already several times in the above). The remaining matrix
\[ L_R = \lambda_R \Gamma(-\alpha_R) \left\{ \left(\nu_R I - A^F\right)^{\alpha_R} - \nu_R^{\alpha_R} I \right\}.\]
\noindent approximates the operator $\mathcal{L}_R$ with $O(h)$, and is the negative of an M-matrix.
\end{proposition}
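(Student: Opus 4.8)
The plan is to mirror the structure of Proposition~\ref{prop-0}, since the statement is explicitly advertised as analogous to the case $\alpha_R < 0$. The key structural difference is the sign of $\alpha_R$: now $0 < \alpha_R < 1$, so $\Gamma(-\alpha_R) < 0$ (the gamma function is negative on $(-1,0)$), whereas before $\Gamma(-\alpha_R)$ was positive. This sign flip is the crux of the argument, and I would track it carefully throughout. First I would set $M = \nu_R I - A^F$ and observe, exactly as in Proposition~\ref{prop-0}, that the existence condition $\mathrm{Re}(\nu_R - \triangledown) > 0$ forces us to choose the forward discretization $A^F$, whose eigenvalues are negative, so that $M$ is an M-matrix with all positive eigenvalues. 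The identity $M^{\alpha_R} = \exp(\alpha_R \log M)$ together with the facts quoted from \cite{BermanPlemmons1994} (that $\log M$ is an M-matrix and that the exponential of a Metzler matrix is nonnegative) shows $M^{\alpha_R}$ is a positive matrix.

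Next I would analyze the matrix $L_R = \lambda_R \Gamma(-\alpha_R)\{M^{\alpha_R} - \nu_R^{\alpha_R} I\}$. Since $\lambda_R > 0$ but $\Gamma(-\alpha_R) < 0$ on this range, the prefactor $\lambda_R \Gamma(-\alpha_R)$ is negative, which reverses the off-diagonal sign pattern relative to Proposition~\ref{prop-0}. Because $M^{\alpha_R}$ is a positive matrix, its off-diagonal entries are $\geq 0$; multiplying by the negative prefactor makes the off-diagonal entries of $L_R$ be $\leq 0$, which is precisely the M-matrix off-diagonal requirement for $-L_R$. The main work, then, is the diagonal sign: I must show the diagonal entries of $L_R$ are positive, i.e.\ that the diagonal entries of $M^{\alpha_R} - \nu_R^{\alpha_R} I$ are negative (so that multiplication by the negative prefactor yields a positive diagonal). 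Here the concavity of $t \mapsto t^{\alpha_R}$ for $0 < \alpha_R < 1$ is the natural tool; I expect to bound the diagonal of $M^{\alpha_R}$ by $(\nu_R + 1/h)^{\alpha_R}$ as in the earlier proof, but now the subadditivity/concavity of the power function will give the needed strict inequality against $\nu_R^{\alpha_R}$. This diagonal estimate is where I anticipate the main obstacle: the previous proof leaned on the extra drift term $[\nu_R^{\alpha_R} - (\nu_R-1)^{\alpha_R}]A^B$ to push the diagonal negative, but that term has now been moved to the diffusion operator, so I must obtain negativity of the diagonal of $M^{\alpha_R} - \nu_R^{\alpha_R}I$ directly.

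Once the diagonal is controlled, I conclude that $-L_R$ has nonpositive off-diagonals and positive diagonal, and that its eigenvalues have positive real part (inherited from the positivity of the eigenvalues of $M$ raised to the power $\alpha_R$, scaled by the negative prefactor), so $-L_R$ is an M-matrix and $L_R$ is its negative, as claimed. The first-order consistency, $O(h)$, follows immediately from the fact that $A^F$ is a first-order one-sided approximation of $\triangledown$; no special argument beyond a Taylor expansion is needed, and I would simply cite the definition of $A^F$ as in Proposition~\ref{prop-0}. I would remark that the legitimacy of moving the drift term out rests on the finite-variation property of the jumps for $0 < \alpha_R < 1$, already invoked in the statement, so that the truncated operator genuinely represents the jump generator up to the relocated term.
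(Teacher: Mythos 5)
Your proposal contains a genuine sign error at its center, and it propagates into every subsequent step. For $0 < \alpha_R < 1$, the matrix $\alpha_R \log M$ (with $M = \nu_R I - A^F$) is a \emph{positive} multiple of an M-matrix, hence itself an M-matrix, not a Metzler matrix; so the fact ``the exponential of a Metzler matrix is nonnegative'' cannot be invoked, and your conclusion that $M^{\alpha_R}$ is a positive matrix is false. That mechanism works only in Proposition~\ref{prop-0}, where $\alpha_R < 0$ makes $\alpha_R \log M$ Metzler. What is actually true --- and is the paper's key observation --- is that $M^{\alpha_R} = \exp(\alpha_R \log M)$ is again an \emph{M-matrix}: since $M$ is upper bidiagonal with constant diagonal $\nu_R + 1/h$, the power $M^{\alpha_R}$ is upper triangular with diagonal entries $(\nu_R + 1/h)^{\alpha_R}$ and nonpositive off-diagonal entries (expand $(I+X)^{\alpha_R}$ in the finite binomial series, where $X$ is the nilpotent upper part scaled by $(\nu_R + 1/h)^{-1}$: for $k \ge 1$ the generalized binomial coefficient has sign $(-1)^{k-1}$ while $X^k$ has entrywise sign $(-1)^k$, so every term is entrywise nonpositive).

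Second, your criterion for ``$L_R$ is the negative of an M-matrix'' is inverted. If $-L_R$ is to be an M-matrix, then $-L_R$ must have nonpositive off-diagonals and positive diagonal; equivalently $L_R$ must have \emph{nonnegative} off-diagonals and \emph{negative} diagonal --- the opposite of what you set out to prove. In particular, your announced ``main obstacle'' (showing via concavity that the diagonal of $M^{\alpha_R} - \nu_R^{\alpha_R} I$ is negative) is not merely hard but false: by monotonicity of $t \mapsto t^{\alpha_R}$ for $\alpha_R > 0$, one has $(\nu_R + 1/h)^{\alpha_R} - \nu_R^{\alpha_R} > 0$, so that diagonal is positive. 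Fortunately nothing more is needed: $M_1 = M^{\alpha_R} - \nu_R^{\alpha_R} I$ is an upper-triangular M-matrix (positive diagonal, nonpositive off-diagonals, eigenvalues equal to the positive diagonal entries), and multiplying by the negative scalar $\lambda_R \Gamma(-\alpha_R)$ turns it into the negative of an M-matrix, which is exactly the paper's short argument. So while you correctly spotted that $\Gamma(-\alpha_R) < 0$ is the crux, the two sign mistakes (positivity of $M^{\alpha_R}$, and the direction of the M-matrix requirements on $L_R$) mean your plan stalls at claims that cannot be established.
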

The proof is also similar. The difference in the proof is as follows: $0 < \alpha_R < 1$ means
that $\Gamma(\alpha_R) < 0$. As $\alpha_R > 0$ matrix $M^{\alpha_R} = \exp(\alpha_R \log M)$
is an M- matrix, so is $M_1 = \left(\nu_R I - A^F\right)^{\alpha_R} - \nu_R^{\alpha_R}$.
The last statement is true because matrix $M$ is upper bi-diagonal, therefore $M^{\alpha_R}$ is upper triangular with diagonal elements $d_i = \left(\nu_R + 1/h\right)^{\alpha_R}$ ( this follows from the definition of the matrix power via a spectral decomposition). As $\left(\nu_R I + 1/h\right)^{\alpha_R}  - \nu_R^{\alpha_R} > 0$ diagonal elements of $M_1$ are positive. Thus, $M_1$ is an M-matrix, and $L_R$ is the negative of an M-matrix.

We run another test with the model parameters same as in the previous one and $\alpha_R = 0.9$. The results are given in Tab.~\ref{tab3}. One can observe the first order convergence in $h$. The "exact" price is $C_{exp}=$ 22.27 cents.
\begin{table}[h!]
\begin{center}
\begin{tabular}{|c|c|c|c|}
\hline
$C_{exp}$ & $h$ & $N$ & $\beta_{exp}$ \cr
\hline
23.9336 & 0.104131 & 100 & - \cr
\hline
22.9222 & 0.051804 & 200 & 1.35 \cr
\hline
22.5558 & 0.025837 & 400 & 1.19 \cr
\hline
22.3944 & 0.012902 & 800 & 1.21 \cr
\hline
22.3170 & 0.006447 & 1600 & 1.43 \cr
\hline
22.2789 & 0.003223 & 3200 & 2.59 \cr
\hline
\end{tabular}
\caption{Convergence of the proposed scheme for CGMY model with $\alpha_R = 0.9$.}
\label{tab3}
\end{center}
\end{table}

However, the second order approximation $O(h^2)$ cannot be constructed by simply replacing $A^F$ with $A^F_2$ when $\alpha_R$ is close to 1. For now we leave this as an open problem. As a work-around, in the next section $O(h^2 + \Delta \tau^2)$ algorithm is  constructed for $1 < \alpha_R < 2$. Then using a price obtained for some $\alpha_R^*, \ 1 < \alpha_R^* < 2$ and prices for $\alpha_R = 0, -1$ obtained using
the approach of \cite{ItkinCarr2012Kinky} (the latter could be computed with the complexity $O(N)$)
an $O(h^2)$ approximation for $ 0 < \alpha_R < 1$ can be found by interpolation.

\subsubsection{Case $\alpha_R = 1$.}
This case could be covered twofold. First, if we have a good method for the region $1 < \alpha_1 < 2$, then prices at $\alpha = 1$ could be obtained by computing three prices at $1 < \alpha_1 < 2, \ \alpha_2 \le 0, \alpha_3 < 0$ and then using interpolation in $\alpha$. This approach relies on the fact that for the CGMY model jump integrals are continuous in $\alpha$ at $\alpha < 2$, see Proposition 5 in \cite{ItkinCarr2012Kinky}.

Another approach is very similar to the previous case $0 < \alpha_R < 1$.
\begin{proposition} \label{alpha1}
Suppose $\alpha_R = 1$ and consider the following discrete approximation of $\mathcal{L}_R$:
\[
L = \lambda_R \Big[ (\nu_R-A^F)\log (\nu_R-A^F) - \nu_R \log (\nu_R)I + \kappa A^F\left(\log (\nu _R-1) - 2 \nu_R \coth^{-1} (1-2 \nu_R) \right)  \Big].
\]
\noindent where $\kappa$ is some constant. This approximates the operator $\mathcal{L}_R$ with $O(h)$, and is the negative of an M-matrix.
\end{proposition}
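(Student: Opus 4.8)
The goal is to verify the hypotheses of Proposition~\ref{prop0} for the matrix $L$, i.e.\ that $-L$ is an M-matrix; positivity and unconditional stability of the scheme then follow automatically. The plan is to reuse the functional-calculus machinery already exploited in Propositions~\ref{prop-0} and~\ref{alpha01}, now adapted to the function $g(x)=x\log x$ that replaces the power $x^{\alpha_R}$ in the limiting case $\alpha_R=1$. First I would record the structural facts about $M \equiv \nu_R I - A^F$. Since $A^F=h^{-1}(S-I)$ with $S$ the upper shift, $M=(\nu_R+h^{-1})I-h^{-1}S$ is upper bidiagonal, hence upper triangular with the single eigenvalue $d=\nu_R+1/h>0$; thus $M$ is an M-matrix, and (as was used implicitly in the earlier proofs) $\log M$, defined through the analytic functional calculus, is also an M-matrix, because the nilpotent expansion $M=dI+eN$ with $e=-1/h$, $N=S$ yields only nonpositive off-diagonal entries for $\log$.

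Next I would dispose of the two easy ingredients. The term $-\lambda_R\nu_R\log\nu_R\,I$ is a negative multiple of the identity (using $\nu_R>1$), so it only makes the diagonal of $L$ more negative. The drift term $\lambda_R\kappa c\,A^F$, with $c=\log(\nu_R-1)-2\nu_R\coth^{-1}(1-2\nu_R)$, is bidiagonal with diagonal $-\kappa c/h$ and superdiagonal $+\kappa c/h$; a short computation (writing $\coth^{-1}(1-2\nu_R)=\tfrac12\log\tfrac{\nu_R-1}{\nu_R}$) gives $c=(1-\nu_R)\log(\nu_R-1)+\nu_R\log\nu_R>0$ for every $\nu_R>1$, so once $\kappa>0$ this term contributes a nonnegative superdiagonal and a nonpositive diagonal.

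The heart of the argument is the sign analysis of the first term $\lambda_R g(M)$. Writing $g(M)=\sum_{k\ge0}\tfrac{g^{(k)}(d)}{k!}e^kN^k$, the entry on the $k$-th superdiagonal is $c_k=\tfrac{g^{(k)}(d)}{k!}e^k$, and one finds $c_0=d\log d>0$, $c_1=-(\log d+1)/h<0$, and $c_k=\tfrac{1}{k(k-1)}d^{-(k-1)}h^{-k}>0$ for $k\ge2$. I would then try to choose the free constant $\kappa$ so that the genuinely negative $c_1$ on the first superdiagonal is overridden by the positive drift contribution $\kappa c/h$, rendering every off-diagonal entry of $L$ nonnegative (the Metzler property), and so that the diagonal of $L$ is driven negative. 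Strict diagonal dominance, checked through Gershgorin's theorem exactly as in the Kou proof, would then place all eigenvalues of $-L$ in the right half-plane and identify $-L$ as an M-matrix, while the $O(h)$ consistency is inherited from the first-order consistency of $A^F$ and the analyticity of the symbol.

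The step I expect to be the real obstacle is precisely this sign bookkeeping. In the pure-power cases $M^{\alpha_R}$ had a uniform off-diagonal sign, so the M-matrix structure was immediate; by contrast $x\log x$ produces the mixed pattern $c_1<0<c_k$ ($k\ge2$), and because $d\sim 1/h$ all of these entries are of the same order $O(1/h)$ rather than decaying in $k$. A single $h$-independent $\kappa$ must then simultaneously cancel a first-superdiagonal term that grows like $h^{-1}\log h^{-1}$, and leave the diagonal negative against the even larger $\lambda_R d\log d\sim\lambda_R h^{-1}\log h^{-1}$, whereas $\kappa c/h$ is only $O(1/h)$ and cannot dominate it. A further concern is that the one-sided stencil makes $M$, and hence the analytic function $g(M)$ and the whole of $L$, upper triangular, so the spectrum of $L$ collapses onto its large positive diagonal, in evident tension with the dissipativity required by Proposition~\ref{prop0}. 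I therefore expect that the clean one-sided construction does not by itself close the argument, and that the proof must either admit an explicit $h$-dependent admissibility range for $\kappa$ (in the spirit of the condition $h<1/\max(\theta_1,\theta_2)$ that appeared for the Kou model) or replace the purely forward logarithmic term by a partly two-sided, boundary-coupled discretization that restores a nondegenerate spectrum.
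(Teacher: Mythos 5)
Your construction is essentially the paper's own proof, carried out in more detail. The paper likewise notes that $M_1=\nu_R I-A^F$ and $\log M_1$ are M-matrices, that the product $M_1\log M_1$ is upper triangular with all entries positive except a negative first superdiagonal (exactly your coefficients $c_0=d\log d>0$, $c_1=-(1+\log d)/h<0$, $c_k>0$ for $k\ge2$), that the drift coefficient $c=\log(\nu_R-1)-2\nu_R\coth^{-1}(1-2\nu_R)$ is positive (you actually prove this, via $c=(1-\nu_R)\log(\nu_R-1)+\nu_R\log\nu_R$, where the paper merely asserts it), and it then concludes by ``taking $\kappa>0$ big enough'' so that the $\kappa c A^F$ term simultaneously kills the negative first superdiagonal and drives the main diagonal negative. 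Where you part ways with the paper is only at the final quantifier: the paper fixes the grid (hence $h$) first and then chooses $\kappa$. For fixed $h$, the two requirements $\kappa c\ge 1+\log d$ (superdiagonal) and $\kappa c> h\,d\log d-h\nu_R\log\nu_R=(1+h\nu_R)\log d-h\nu_R\log\nu_R$ (diagonal) are both of size roughly $\log d$, so a single finite $\kappa$ meets them — contrary to your claim that the diagonal term is ``even larger'' and cannot be dominated; that is true only under your extra premise that $\kappa$ is $h$-independent. With the grid-dependent choice your argument closes immediately, and in fact more easily than you suggest: no Gershgorin estimate is needed, because $L$ is upper triangular, so its eigenvalues are its (now negative) diagonal entries and $-L$ is an M-matrix. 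Your ``further concern'' about the spectrum collapsing onto a large positive diagonal evaporates for the same reason once $\kappa$ is chosen this way.

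That said, the obstacle you isolate is a legitimate criticism of the proposition's wording rather than of its substance. Since both thresholds grow like $\log(1/h)$, no $h$-independent ``constant'' $\kappa$ works, and for any fixed $\kappa$ (e.g.\ the value $\kappa=5$ used in the paper's own numerical test) the M-matrix property must fail once the grid is refined far enough; the admissible range is genuinely $h$-dependent, in the same spirit as the Kou condition $h<1/\max(\theta_1,\theta_2)$, with $\kappa\gtrsim\log(1/h)/c$. This also means the drift $\Delta D=(\kappa-1)c\,\triangledown$ borrowed from the diffusion part grows logarithmically under refinement, injecting an $O(h\log(1/h))$ consistency error — still compatible with the first-order convergence the paper reports, but worth stating. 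In short: your computations constitute a complete and even sharpened version of the paper's proof; the only missing step was to grant the choice of $\kappa$ after the grid, instead of treating its $h$-dependence as fatal.
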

The proof is also similar. Indeed, according to this discretization $M_1 = \nu_R-A^F$ is an M-matrix, therefore $\log M_1$ is also an M-matrix. The product of this two M-matrices is an upper triangular matrix with all positive elements except of that at the first upper diagonal. Now observe that $\log (\nu _R-1) - 2 \nu_R \coth^{-1} (1-2 \nu_R) > 0$. Therefore, taking $\kappa > 0$ big enough dumps the negative values at the first upper diagonal and at the same time makes elements of the main diagonal all negative. Thus, the whole matrix $L$ is the negative of an M-matrix.

As in the original jump integral we have just $\kappa = 1$ the trick is to borrow $\Delta D = (\kappa-1)\left[\log (\nu _R-1) - 2 \nu_R \coth^{-1} (1-2 \nu_R) \right]\triangledown$ term from the diffusion part. In other words, we can re-distribute some terms in our splitting algorithm between the
diffusion and jump parts, as we did that for Kou and Merton models, and for CGMY
model with $\alpha_R < 1$, by moving a drift-like term from the diffusion to the jump part.
Accordingly, to compensate we need to subtract $\Delta D$ from the drift term in the diffusion part.
This potentially could result in the negative drift term which, however, is not a problem.

The results (call option prices in dollars) given below in Tab.~\ref{TabAL1} are obtained by applying this algorithm in the test with parameters $S = K = 100, T = 0.05; r = 0.05, \sigma = 0.15, \lambda_R = 0.1, \nu_R = 2, \kappa = 5$. The exact price is $C = 2.1428$ cents was obtained at $N = 2000$.
\begin{table}[h!]
\begin{center}
\begin{tabular}{|c|c|c|c|}
\hline
$C_{exp}$ & $h$ & $N$ & $\beta_{exp}$ \cr
\hline
  3.7296 & 0.1381550 & 101 &  - \cr
\hline
  2.6527 & 0.0690776 & 201 & 1.638 \cr
\hline
  2.3939 & 0.0345388 & 401 & 1.022 \cr
\hline
  2.2402 & 0.0172694 & 801 & 1.366 \cr
\hline
  2.1594 & 0.0086347 & 1601& 2.552\cr
\hline
\end{tabular}
\caption{Convergence of the proposed scheme for CGMY model with $\alpha_R = 1$.}
\label{TabAL1}
\end{center}
\end{table}
The first order convergence could be observed.

Similar to the previous case the second order approximation $O(h^2)$ cannot be constructed by simply replacing $A^F$ with $A^F_2$. We leave this as an open problem as well. At the beginning of this section we mentioned an interpolation approach which is applicable if the second order approximation could be constructed for $1 < \alpha_R < 2$. Then it could be used as a work-around to construct the $O(h^2)$ approximation.

\subsubsection{Case $1 < \alpha_R < 2$.}

This case is the most difficult, see, e.g., \cite{WangWanForsyth2007}. Below based on our general approach we provide an analysis of why a standard method experiences a problem in this range of the $\alpha_R$ values, and describe a variation of our method to address this problem.

Consider a discrete counterpart $L_R$ of the operator $\mathcal{L}_R$
\begin{equation} \label{cgmy12}
 L_R = \lambda_R \Gamma(-\alpha_R) \left\{ \left(\nu_R I - A_1\right)^{\alpha_R} - \nu_R^{\alpha_R}I + \left[ \nu_R^{\alpha_R} - (\nu_R-1)^{\alpha_R}\right] A_2 \right\}.
\end{equation}
\noindent where $A_1, A_2$ are some discrete approximations of the operator $\triangledown$ (i.e. $A_1 \propto 1/h, \ A_2 \propto 1/h$). Observe, that for this range of $\alpha_R$ the following inequalities take place
\[ \Gamma(-\alpha_R) > 0, \quad  \left[ \nu_R^{\alpha_R} - (\nu_R-1)^{\alpha_R}\right] > 0, \]
\noindent as well as the existence condition in \eqref{whole} requires $\lambda_i(M_1) > 0, \ i=1,N$ with $\lambda_i(M_1)$ being the eigenvalues of matrix $M_1 = \nu_R I - A_1$.

To remind, based on Proposition~\ref{prop0} we want $L_R$ to be the negative of an M-matrix. However, this could not be achieved. Indeed, suppose we chose $A_1 = A^B$. Then matrix $M_1$ is the Metzler matrix, unless $h$ is restricted from the bottom, $h > 1/\nu_R$, which is not a good choice because the accuracy of such a method is also restricted by these values of $h$. But on the other hand at $h < 1/\nu_R$ we break the existence condition because $\lambda_i(M_1) < 0, \ i=1,N$. Thus, $A_1 = A^B$ is not a choice.

Now let us try $A_1 = A^F$. Then $M_1$ is a bi-diagonal M-matrix with negative elements on the first upper diagonal. Therefore, $M_1^{\alpha_R}$ is an upper triangular matrix, also with negative elements on the first upper diagonal (property 1). Trying to construct $L_R$ to be the negative of an M-matrix we must choose $A_2 = A^F$. But as
\[ \lambda_R \Gamma(-\alpha_R) \left\{ \left(\nu_R +1/h \right)^{\alpha_R} - \nu_R^{\alpha_R} + \left[ \nu_R^{\alpha_R} - (\nu_R-1)^{\alpha_R}\right]/h \right\} > 0, \ \forall h \]
\noindent it is not possible to have the diagonal elements to be non-positive (property 2). Both properties 1 and 2 make it impossible to construct a stable approximation of $L_R$. The effect should be more pronounced when $\alpha_R$ moves from 1 to 2, similar to what was observed in \cite{WangWanForsyth2007}.

The following proposition solves the above problem.
\begin{proposition} \label{alphaR12}
Consider $1 < \alpha_R < 2$. Because the singularity in the CGMY measure has been already integrated out, the last term in the operator $\mathcal{L}_R$ could be taken out of the jump operator and moved to the diffusion part. Suppose that the following discretization scheme for the remaining operator
\[ L_R = \lambda_R \Gamma(-\alpha_R) \left[ \left(\nu_R  - \triangledown\right)^{\alpha_R} - \nu_R^{\alpha_R} \right]
\]
is in order
\begin{align} \label{cgmy2}
 M &= \lambda_R \Gamma(-\alpha_R) \left[ \left(A_2^C + \nu_R^2 I - 2 \nu_R A^C\right)  \left( \nu_RI - A^F_2\right)^{\alpha_R-2} - \nu_R^{\alpha_R} I \right]
 \end{align}
 \noindent where $A^C_2 = A^F\dot A^B$ is the central difference approximation of the second derivative $\triangledown^2$, $A^C = (A^F + A^B)/2$ is the central difference approximation of the first derivative $\triangledown$. Then $M$ is an $O(h^2)$ approximation  of the operator $L_R$ and the negative of an M-matrix.
\end{proposition}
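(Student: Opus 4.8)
The plan is to exploit the exact operator factorization
\[
(\nu_R - \triangledown)^{\alpha_R} = (\nu_R - \triangledown)^2\,(\nu_R - \triangledown)^{\alpha_R-2},
\]
which holds because both factors are functions of the single operator $\triangledown$ and hence commute. Since $1 < \alpha_R < 2$ we have $\alpha_R - 2 \in (-1,0)$, so the first factor carries the full differential order while the second is a smoothing (negative-order) operator. The matrix $M$ in \eqref{cgmy2} is precisely the product of a second-order discretization of each factor: $Q := A_2^C + \nu_R^2 I - 2\nu_R A^C$ for $(\nu_R-\triangledown)^2 = \nu_R^2 - 2\nu_R\triangledown + \triangledown^2$, and $P := (\nu_R I - A^F_2)^{\alpha_R-2}$ for $(\nu_R-\triangledown)^{\alpha_R-2}$. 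There are then two claims to establish: that $M$ is consistent to $O(h^2)$, and that $M$ is the negative of an M-matrix, so that Proposition~\ref{prop0} applies.

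For consistency I would proceed termwise. The central differences satisfy $A^C = \triangledown + O(h^2)$ and $A_2^C = \triangledown^2 + O(h^2)$ on smooth functions, so $Q = (\nu_R-\triangledown)^2 + O(h^2)$. Likewise $A^F_2 = \triangledown + O(h^2)$, and since the spectrum of $\nu_R I - A^F_2$ is positive (it is upper triangular with the single eigenvalue $\nu_R + 3/(2h)$, exactly as in the Kou analysis) the map $z\mapsto z^{\alpha_R-2}$ is analytic there, giving $P = (\nu_R-\triangledown)^{\alpha_R-2} + O(h^2)$. Writing $Q = (\nu_R-\triangledown)^2 + h^2 R_1$ and $P = (\nu_R-\triangledown)^{\alpha_R-2} + h^2 R_2$ with $R_1,R_2$ bounded on smooth data, the product expands as
\[
QP = (\nu_R-\triangledown)^{\alpha_R} + h^2\big[(\nu_R-\triangledown)^2 R_2 + R_1(\nu_R-\triangledown)^{\alpha_R-2}\big] + O(h^4),
\]
whose bracket is $O(1)$ on smooth functions, so the overall truncation error is $O(h^2)$; subtracting $\nu_R^{\alpha_R}I$ and scaling by $\lambda_R\Gamma(-\alpha_R)$ does not change the order.

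For the sign structure, note first that $\lambda_R\Gamma(-\alpha_R) > 0$ on $(1,2)$, so it suffices to show that $QP - \nu_R^{\alpha_R}I$ is a Metzler matrix whose eigenvalues all have negative real part. The matrix $Q$ is tridiagonal with diagonal $\nu_R^2 - 2/h^2$, super-diagonal $1/h^2 - \nu_R/h$ and sub-diagonal $1/h^2 + \nu_R/h$; for $h < 1/\nu_R$ the off-diagonals are nonnegative and the diagonal is negative, so $Q$ is Metzler. This is exactly the payoff of using the central-difference $Q$ rather than $(\nu_R I - A^F_2)^2$: the latter, like the direct scheme $(\nu_R I - A^F_2)^{\alpha_R}$ analyzed just before the proposition, has a negative super-diagonal and cannot be made Metzler, which is the source of the instability near $\alpha_R=2$ reported in \cite{WangWanForsyth2007}. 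The factor $P$ I would treat by writing $\nu_R I - A^F_2 = d(I - N)$ with $d = \nu_R + 3/(2h) > 0$ and $N$ strictly upper triangular (hence nilpotent), so that $P = d^{\alpha_R-2}\sum_{k\ge 0}\binom{\alpha_R-2}{k}(-N)^k$ is a finite sum; since $\binom{\alpha_R-2}{k}(-1)^k > 0$ for every $k$ when $\alpha_R-2 \in (-1,0)$, $P$ is upper triangular with positive diagonal $d^{\alpha_R-2}$. One then combines these: multiplying the Metzler $Q$ by $P$ should retain nonnegative off-diagonal entries, while the dominant $-2/h^2$ diagonal of $Q$, reinforced by $-\nu_R^{\alpha_R}I$, keeps the diagonal of $M$ negative, after which a Gershgorin estimate on the resulting diagonally dominant Metzler matrix places all eigenvalues in the left half-plane, as in the proofs of Propositions~\ref{prop-0}--\ref{alpha1}.

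The main obstacle is the off-diagonal sign analysis of the product $QP$. Although $Q$ is Metzler and $P$ is upper triangular with positive diagonal, the nilpotent $N$ has a \emph{negative} second-super-diagonal entry (coming from the $-1/(2h)$ term of $A^F_2$), so the higher powers $N^k$ carry mixed signs and the off-diagonal entries of $P$ are not all nonnegative. Verifying that these mixed-sign contributions cannot overturn the nonnegativity of the off-diagonal entries of $QP$, and that the diagonal dominance survives the multiplication, is the delicate part; here the precise $h$-scaling ($Q \sim h^{-2}$, $P \sim h^{2-\alpha_R}$, hence $M \sim h^{-\alpha_R}$, matching the order of the fractional operator) must be used to confirm that the dominant entries carry the correct sign for all sufficiently small $h$.
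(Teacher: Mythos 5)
Your overall strategy coincides with the paper's: the factorization $(\nu_R-\triangledown)^{\alpha_R}=(\nu_R-\triangledown)^2(\nu_R-\triangledown)^{\alpha_R-2}$, the termwise consistency argument, the observation that $Q=A_2^C+\nu_R^2I-2\nu_R A^C$ is tridiagonal and Metzler for small $h$, and the sign $\lambda_R\Gamma(-\alpha_R)>0$ all appear in the paper's proof of Proposition~\ref{alphaR12}. The genuine gap is exactly the step you flag as ``delicate'' and leave open: you never establish that $P=\left(\nu_R I - A_2^F\right)^{\alpha_R-2}$ is a nonnegative matrix, and your binomial-series route cannot deliver this. Writing $\nu_R I - A_2^F=d(I-N)$ with $d=\nu_R+3/(2h)$, the nilpotent $N$ has a negative second super-diagonal (inherited from the $-1/(2h)$ entry of $A_2^F$), so even though the series coefficients $(-1)^k\binom{\alpha_R-2}{k}$ are all positive, the powers $N^k$ have mixed signs and entrywise nonnegativity of the sum does not follow. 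Without $P\ge 0$ you cannot conclude that $QP$ is Metzler, and your proposed fallback (a Gershgorin estimate on the ``resulting diagonally dominant Metzler matrix'') presupposes the very Metzler property that is in question; the $h$-scaling remark does not repair this, since the problematic entries of $N^k$ scale exactly like the benign ones.

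The paper closes this gap with a tool your attempt lacks: the theory of eventually nonnegative matrices. Its Lemma~\ref{lemma2} shows that $\nu_R I - A_2^F$ is an EM-matrix in the sense of \cite{ElhashashSzyld2008}, i.e.\ expressible as $sI-B$ with $B$ eventually nonnegative and $\rho(B)<s$; this is done by proving that $A_2^F$ is eventually \emph{exponentially} nonnegative and invoking the equivalence of \cite{Noutsos2008} (Lemma~\ref{lemma1}) to pass to eventual nonnegativity of $A_2^F+bI$ for $b=3/(2h)+\epsilon$. Lemma~\ref{lemma3} then applies Theorem 4.2 of \cite{LeMcDonald2006} (on inverses built from irreducible eventually nonnegative matrices) to conclude $\left(\nu_R I - A_2^F\right)^{-1}\ge 0$. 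With that in hand, the paper treats $P$ as $\left[\left(\nu_R I - A_2^F\right)^{-1}\right]^{2-\alpha_R}$, a power with exponent in $(0,1)$ of a nonnegative matrix, hence nonnegative; multiplying the Metzler $Q$ by this nonnegative factor and subtracting $\nu_R^{\alpha_R}I$ then yields the Metzler property of $M$. So the missing idea in your proposal is precisely this passage through eventual nonnegativity and EM-matrices to obtain $P\ge 0$: it is a spectral/asymptotic positivity argument, not an entrywise one, and no direct expansion of the fractional power can substitute for it.
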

\begin{proof}
See Appendix.
\end{proof}
The trick is that we represent the operator $(\nu_R - \triangledown)^{\alpha_R}$ as
$L_{1R} = (\nu_R - \triangledown)^2 (\nu_R - \triangledown)^{-\varepsilon}$ where $\varepsilon \equiv 2-\alpha_R$. The first multiplier in $L_{1R}$ is a convection-diffusion operator, and we use a well-known central difference approximation of the second order to discretize this part. The second multiplier is similar to
$(\nu_R - \triangledown)^{\alpha_R}$ in the case $-1 < \alpha < 0$ (because by definition $-1 < -\varepsilon < 0$, and, therefore, we use same discretization as in that case.

To check the convergence numerically we run the same test as for $0 < \alpha_R < 1$, but now choosing  $T = 0.01, \alpha_R = 1.98$. The results are given in Tab.~\ref{tab4}. The "exact" price at $N=2000$ is $C=$ 8.1973. All prices are computed via the matrix exponential.

\begin{table}[h!]
\begin{center}
\begin{tabular}{|c|c|c|c|}
\hline
$C$ & $h$ & $N$ & $\beta$ \cr
\hline
8.2197 & 0.2763100 & 51 &   - \cr
\hline
7.9533 & 0.1381550 & 101 &  3.443 \cr
\hline
8.1558 & 0.0690776 & 201 &  2.557 \cr
\hline
8.1836 & 0.0345388 & 401 &  1.592 \cr
\hline
8.1943 & 0.0172694 & 801 &  2.210 \cr
\hline
8.1970 & 0.0086347 & 1601 & 3.214  \cr
\hline
\end{tabular}
\caption{Convergence of the proposed scheme for the CGMY model with $\alpha_R = 1.98$.}
\label{tab4}
\end{center}
\end{table}
While the convergency ratio $\beta$ looks a bit sporadic, the rate of convergence is closer to 2.

Further analysis of the matrix $M$ reveals two important observations. First, the minimum eigenvalue of $M$ could be close to zero. Therefore, the proposed scheme is close to a family of the $A$-stable schemes, rather than to the $L$-stable ones \footnote{An example of an $A$-stable scheme is the familiar Crank-Nicholson scheme. But we want to underline that 0 doesn't belong to the spectrum of $M$, so formally the scheme is L-stable, while with convergence properties close to the A-stable scheme. The formal L-stability is important, e.g., for computing the option Greeks.}.
Second, the maximum eigenvalue of $e^{\Delta \tau M}$ as $h \rightarrow 0$ tends to 1 which makes the convergence slow, and the conditional number of the matrix high. Also under this situation round-off errors could play a significant role. Performance-wise as it was mentioned in \cite{WangWanForsyth2007} Picard iterations in this case converge very slow and, therefore, direct computation of the matrix exponential (this step could be pre-computed) followed by computation of the product of matrix by vector could be preferable. Our experiments show that the necessary number of iterations could exceed 30. A simple calculus shows that two FFT with the total number of nodes $N$=3000 (including the extended grid to avoid wrap-around effects) gives complexity $O(2 \cdot 30 \cdot N \log_2N ) \propto 2\cdot 10^6$ which corresponds to the complexity of multiplication of a $N \mbox{x}N$ matrix by a $N\mbox{x}1$ vector with $N=1400$. Also if a uniform grid is used, matrix $e^{\Delta \tau M}$ is the Toeplitz matrix, therefore the FFT algorithm for computing a matrix by vector product is applied. Also as shown in \cite{WangWanForsyth2007} values obtained at a non-uniform grid could be re-interpolated (with complexity $O(N)$) to the uniform grid, so again FFT can be applied for the matrix-vector multiplication followed by the back interpolation to the non-uniform grid.

\subsubsection{Approximations of $\mathcal{L}_L$}
Approximations to $\mathcal{L}_L$ can be constructed in a way similar to those corresponding to $\mathcal{L}_R$. Below we will present a few propositions that specify our construction. Proofs of these propositions are omitted because they are very similar to that for $\mathcal{L}_R$.

\begin{proposition} \label{alphaL}
If $\alpha_L < 0$, then the discrete counterpart $L_L$ of the operator $\mathcal{L}^*_L$ which is
$\mathcal{L}_L$ with the "drift" term moved to the diffusion part, is the negative of an M-matrix if
\[ L_L = \lambda_L \Gamma(-\alpha_L) \left\{ \left(\nu_L I + A^B_2\right)^{\alpha_L} - \nu_L^{\alpha_L} \right\}.\]
The matrix $L_L$ is an $O(h^2)$ approximation  of the operator $\mathcal{L}^*_L$.
\end{proposition}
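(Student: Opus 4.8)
The plan is to mirror, under a left--right reflection, the second-order argument already sketched for $\mathcal{L}_R$ in the case $\alpha_R<0$. Everywhere the right operator used the forward stencils $A^F,A^F_2$ and the combination $\nu_R-\triangledown$, the left operator uses the backward stencils $A^B,A^B_2$ and the combination $\nu_L+\triangledown$, with $\nu_R\mapsto\nu_L$, $\alpha_R\mapsto\alpha_L$, $\lambda_R\mapsto\lambda_L$. First I would record the two reductions that make the statement an exact analogue of the right case: since $\alpha_L<0$ the jumps have finite activity and finite variation, so the drift term $\left[\nu_L^{\alpha_L}-(\nu_L+1)^{\alpha_L}\right]\triangledown$ may be transferred to the diffusion part (this defines $\mathcal{L}^*_L$) and thus contributes no first-order error; and the second-order backward stencil $A^B_2$ reproduces $\triangledown$ to $O(h^2)$, so $L_L$ is consistent with $\mathcal{L}^*_L$ to $O(h^2)$. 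I would also verify the discrete existence condition $\mathrm{Re}(\nu_L+\triangledown)>0$: the matrix $M\equiv\nu_L I+A^B_2$ is lower triangular with constant diagonal $c\equiv\nu_L+3/(2h)$, so all its eigenvalues equal $c>0$, and the branch of $M^{\alpha_L}$ is therefore unambiguous.

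The routine sign bookkeeping comes next. With the prefactor $\lambda_L\Gamma(-\alpha_L)>0$ (positive because $-\alpha_L>0$), the matrix $L_L=\lambda_L\Gamma(-\alpha_L)\{M^{\alpha_L}-\nu_L^{\alpha_L}I\}$ is the negative of an M-matrix precisely when (i) its off-diagonal entries are nonnegative and (ii) its diagonal entries are negative, the positivity of the spectrum of $-L_L$ then being automatic since $-L_L$ is triangular with constant diagonal $\lambda_L\Gamma(-\alpha_L)(\nu_L^{\alpha_L}-c^{\alpha_L})$. Claim (ii) is immediate: $M^{\alpha_L}$ is lower triangular with diagonal $c^{\alpha_L}$, and since $\alpha_L<0$ and $c>\nu_L$ we have $c^{\alpha_L}-\nu_L^{\alpha_L}<0$. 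Everything thus reduces to claim (i), namely that $M^{\alpha_L}$ is an entrywise nonnegative matrix.

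The main obstacle is exactly this last point, because $M$ is not itself an M-matrix: its first sub-diagonal entry $-2/h$ has the sign an M-matrix wants, but its second sub-diagonal entry $+1/(2h)$ is positive and violates the Z-matrix sign pattern. The route I would take is to write $M=cI+L$ with $L$ the strictly lower-triangular (hence nilpotent) remainder, and either use the finite binomial expansion $M^{\alpha_L}=\sum_{k\ge0}\binom{\alpha_L}{k}c^{\alpha_L-k}L^{k}$ or, equivalently, show that $\alpha_L\log M=\alpha_L(\log c)I+\alpha_L\sum_{k\ge1}\frac{(-1)^{k+1}}{k}(L/c)^{k}$ is a Metzler matrix, so that $M^{\alpha_L}=\exp(\alpha_L\log M)$ is nonnegative. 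This is where the heuristic invoked for $\mathcal{L}_R$---that the positive second-subdiagonal entry is small compared with the main and first-subdiagonal entries---must be made precise: on each sub-diagonal of $\log M$ one must weigh the negative contribution generated by powers of the dominant first-subdiagonal term $-2/(hc)$ against the positive contribution seeded by the $+1/(2hc)$ term, and show the former dominates so that every off-diagonal of $\log M$ is $\le 0$ (for $h$ small enough that $c>1$, which simultaneously makes $\log c>0$). For instance, on the second sub-diagonal the leading balance is $\tfrac{1}{2hc}-\tfrac{2}{h^2c^2}+\cdots$, which is negative as $h\to0$ since $c\sim 3/(2h)$. I expect the sign analysis of the mixed nilpotent powers $L^{k}$ to be the delicate step, and the cleanest way to organize it is a symbol computation: because $M$ is (up to boundary rows) a banded Toeplitz matrix, one checks directly that the Fourier coefficients of $\log m(\theta)$, $m(\theta)=\nu_L+\tfrac{3-4e^{-i\theta}+e^{-2i\theta}}{2h}$, have the required nonpositive sign, rather than multiplying the banded matrices by hand. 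Once claim (i) is established, combining it with (ii) and the positive prefactor shows $-L_L$ is an M-matrix and closes the proof.
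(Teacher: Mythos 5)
Your proposal is correct and follows essentially the same route as the paper, which omits this proof as the left--right mirror of the second-order $\alpha_R<0$ argument: move the drift term to the diffusion part, observe that $M=\nu_L I+A^B_2$ is not itself an M-matrix but its logarithm is, so that $M^{\alpha_L}=\exp(\alpha_L\log M)$ is entrywise nonnegative, and finish with the diagonal estimate $c^{\alpha_L}<\nu_L^{\alpha_L}$ (where $c=\nu_L+3/(2h)$) together with the positive prefactor $\lambda_L\Gamma(-\alpha_L)$. Where the paper merely asserts the sign pattern of $\log M$ heuristically (the positive second-subdiagonal entries being ``small'' relative to the main and first subdiagonals), you correctly identify this as the delicate step and sketch how to make it precise via the nilpotent expansion and the Toeplitz symbol, which goes slightly beyond the paper's own level of rigor.
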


\begin{proposition} \label{alpha01L}
If $0 < \alpha_L < 1$, then the discrete counterpart $L_L$ of the operator $\mathcal{L}^*_L$ which is
$\mathcal{L}_L$ with the "drift" term moved to the diffusion part, is the negative of an M-matrix if
\[ L_L = \lambda_L \Gamma(-\alpha_L) \left\{ \left(\nu_L I + A^B\right)^{\alpha_L} - \nu_L^{\alpha_L} \right\}.\]
The matrix $L_L$ is an $O(h)$ approximation  of the operator $\mathcal{L}^*_L$.
\end{proposition}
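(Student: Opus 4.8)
The plan is to mirror the proof of Proposition~\ref{alpha01} for the right operator, exploiting the left--right symmetry of the GTSP generator: wherever that proof used the forward difference $A^F$, I would use the backward difference $A^B$, and then re-verify that the relevant sign combinations still conspire correctly. First I would record the matrix structure of $M \equiv \nu_L I + A^B$. Since $A^B$ carries $1/h$ on the main diagonal and $-1/h$ on the first lower diagonal, $M$ is lower bidiagonal with positive diagonal entries $\nu_L + 1/h$ and nonpositive sub-diagonal entries $-1/h$; hence it is an M-matrix whose eigenvalues are exactly its diagonal entries $\nu_L + 1/h > 0$. This choice of $A^B$ rather than $A^F$ is precisely what enforces the existence condition $Re(\nu_L + \triangledown) > 0$ of \eqref{whole}, because the spectrum of $M$ lies strictly in the right half-plane; the forward difference would instead produce eigenvalues $\nu_L - 1/h$, which are negative for small $h$ and thus inadmissible.

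The core step is to control the fractional power $M^{\alpha_L}$ for $0 < \alpha_L < 1$. As $M$ is lower triangular, $M^{\alpha_L} = \exp(\alpha_L \log M)$ is again lower triangular, with diagonal entries $(\nu_L + 1/h)^{\alpha_L}$. To see that its off-diagonal entries are nonpositive I would write the constant-coefficient interior block as $aI + cN$, with $a = \nu_L + 1/h > 0$, $c = -1/h < 0$, and $N$ the nilpotent lower shift; the binomial expansion then makes the $k$-th sub-diagonal entry proportional to $\binom{\alpha_L}{k}(c/a)^k$. For $0 < \alpha_L < 1$ one has $\mathrm{sign}\,\binom{\alpha_L}{k} = (-1)^{k-1}$ and $\mathrm{sign}\,(c/a)^k = (-1)^k$, so each such entry carries the sign $(-1)^{2k-1} = -1$ and is nonpositive. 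Consequently $M_1 \equiv M^{\alpha_L} - \nu_L^{\alpha_L} I$ has strictly positive diagonal entries $(\nu_L + 1/h)^{\alpha_L} - \nu_L^{\alpha_L} > 0$ together with nonpositive off-diagonal entries, i.e. $M_1$ is itself an M-matrix.

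It then remains to absorb the scalar prefactor. For $0 < \alpha_L < 1$ we have $-1 < -\alpha_L < 0$, so $\Gamma(-\alpha_L) = \Gamma(1-\alpha_L)/(-\alpha_L) < 0$, while $\lambda_L > 0$. Hence $L_L = \lambda_L \Gamma(-\alpha_L) M_1 = -\big(\lambda_L |\Gamma(-\alpha_L)|\big) M_1$, and since a positive scalar multiple of an M-matrix is again an M-matrix, $L_L$ is the negative of an M-matrix, which by Proposition~\ref{prop0} delivers unconditional stability and positivity. The order of approximation is immediate: the one-sided difference $A^B$ approximates $\triangledown$ to $O(h)$, so $\nu_L I + A^B$ approximates $\nu_L + \triangledown$, and thereby $\mathcal{L}^*_L$, to the same order, once the finite-variation drift term $\lambda_L \Gamma(-\alpha_L)[\nu_L^{\alpha_L} - (\nu_L+1)^{\alpha_L}]\triangledown$ has been moved into the diffusion part.

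I expect the only genuine subtlety to be the sign analysis of the off-diagonal entries of $M^{\alpha_L}$, namely the assertion that a fractional power with exponent in $(0,1)$ of a lower-bidiagonal M-matrix preserves nonpositivity off the diagonal. The binomial argument above settles this cleanly on the constant-coefficient interior, and since the bidiagonal pattern is unaltered at the grid boundaries, the spectral-calculus reasoning of Proposition~\ref{alpha01} transfers verbatim; I would simply cite \cite{BermanPlemmons1994} for the remaining M-matrix bookkeeping rather than reproving it.
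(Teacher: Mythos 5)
Your proposal is correct and follows essentially the route the paper intends: the paper omits this proof, stating it is ``very similar'' to the $\mathcal{L}_R$ case, and your argument is precisely the mirror of Proposition~\ref{alpha01} with $A^F$ replaced by $A^B$ (lower-bidiagonal $M = \nu_L I + A^B$ with spectrum $\nu_L + 1/h > 0$ enforcing the existence condition $Re(\nu_L+\triangledown)>0$, positivity of the diagonal of $M^{\alpha_L} - \nu_L^{\alpha_L}I$ by monotonicity of $x \mapsto x^{\alpha_L}$, and the final sign flip from $\lambda_L \Gamma(-\alpha_L) < 0$). The one place you go beyond the paper is the binomial computation $a^{\alpha_L}\sum_k \binom{\alpha_L}{k}(c/a)^k N^k$ showing that the sub-diagonal entries of $M^{\alpha_L}$ are nonpositive; the paper's proof of Proposition~\ref{alpha01} merely asserts that the fractional power of the bidiagonal M-matrix is again an M-matrix and verifies only the diagonal, so your sign analysis closes a gap in the cited argument rather than duplicating it.
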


\begin{proposition} \label{alpha1L}
Suppose $\alpha_L = 1$ and consider the following discrete approximation of $\mathcal{L}_L$:
\[
L = \lambda_L \Big\{ (\nu_L+A^B)\log (\nu_L+A^B) - \nu_L \log (\nu_L)I - \kappa A^B
\left[(\nu_L+1)\log (\nu_L+1) - \nu_L \log \nu_L \right]  \Big\}.
\]
\noindent where $\kappa$ is some constant. This approximates the operator $\mathcal{L}_L$ with $O(h)$, and is the negative of an M-matrix.
\end{proposition}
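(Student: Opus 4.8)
The plan is to mirror the proof of Proposition~\ref{alpha1} line by line, interchanging the forward difference $A^F$ with the backward difference $A^B$ and $\nu_R$ with $\nu_L$, so that ``upper triangular'' becomes ``lower triangular'' throughout. First I would note that at $\alpha_L=1$ the operator $\mathcal{L}_L$ depends on $\triangledown$ only through the combination $\nu_L+\triangledown$, whose existence condition is $Re(\nu_L+\triangledown)>0$; this forces the backward discretization $M_1\equiv\nu_L I+A^B$, a lower bidiagonal matrix with diagonal entries $\nu_L+1/h>0$ and subdiagonal entries $-1/h<0$. Such an $M_1$ is an M-matrix (nonpositive off-diagonals, positive diagonal, and, being triangular, positive eigenvalues $\nu_L+1/h$). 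Writing $M_1=dI+eF$ with $d=\nu_L+1/h$, $e=-1/h$, and $F$ the unit first-subdiagonal shift, the logarithm $\log M_1=\log(d)\,I+\sum_{k\ge1}c_kF^k$, $c_k=\tfrac{(-1)^{k+1}}{k}(e/d)^k$, terminates by nilpotency of $F$, has positive diagonal $\log d$ (for $h$ small enough that $d>1$) and $k$-th subdiagonal $c_k=-|e/d|^k/k<0$; hence $\log M_1$ is again an M-matrix.

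The central, and only genuinely delicate, step is to pin down the sign pattern of the product $M_1\log M_1$, which discretizes the leading term $(\nu_L+\triangledown)\log(\nu_L+\triangledown)$. I would read it off from the two finite series above. The diagonal is $d\log d>0$; the first subdiagonal works out to $e(1+\log d)<0$; and for every deeper ($m$-th, $m\ge2$) subdiagonal the two contributions $d\,c_m+e\,c_{m-1}$ combine to $(-e)^m/[d^{m-1}m(m-1)]$, which is strictly positive because $e<0$. Thus $M_1\log M_1$ is lower triangular with positive diagonal, a single negative entry on the first subdiagonal, and positive entries on all deeper subdiagonals — the exact left-handed mirror of the pattern exploited in the $\alpha_R=1$ case.

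It then remains to flip the diagonal and neutralize that lone negative subdiagonal. Adding $-\nu_L\log(\nu_L)I$ shifts the diagonal to $d\log d-\nu_L\log\nu_L$ without touching the off-diagonals, while the drift correction $-\kappa A^B d_L$, with $d_L\equiv(\nu_L+1)\log(\nu_L+1)-\nu_L\log\nu_L$, adds $-\kappa d_L/h$ to the diagonal and $+\kappa d_L/h$ to the first subdiagonal only (since $A^B$ is bidiagonal). Here I would record, as the analogue of the positivity of the constant in Proposition~\ref{alpha1}, that the function $g(\nu)=(\nu+1)\log(\nu+1)-\nu\log\nu$ has $g'(\nu)=\log(1+1/\nu)>0$ and $g(0^+)=0$, so $d_L=g(\nu_L)>0$ for every admissible $\nu_L>0$. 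Choosing $\kappa$ large enough then simultaneously renders the diagonal of $L/\lambda_L$ negative and the first subdiagonal nonnegative, leaving the already-positive deeper subdiagonals intact; consequently $-L$ has positive diagonal, nonpositive off-diagonals, and positive eigenvalues (triangular), so $L$ is the negative of an M-matrix. The $O(h)$ estimate is immediate from the first-order consistency of $A^B$. The hard part is thus entirely the subdiagonal sign bookkeeping for $M_1\log M_1$; the remaining subtlety is practical rather than mathematical, namely that the true operator has $\kappa=1$, so one must borrow the residual $(\kappa-1)d_L\triangledown$ from the diffusion operator of the splitting and compensate by a (possibly negative) shift of the diffusion drift, exactly as already done for the $\alpha_R=1$ model.
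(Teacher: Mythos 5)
Your proof is correct and takes essentially the same route as the paper: the paper omits the proof of Proposition~\ref{alpha1L} as ``very similar'' to that of Proposition~\ref{alpha1}, and your argument is exactly that proof mirrored to the lower-triangular side ($M_1$ and $\log M_1$ are M-matrices, their product is triangular with positive entries except on the first off-diagonal, and a large enough $\kappa$ simultaneously makes the diagonal negative and neutralizes that single negative off-diagonal). Your explicit series bookkeeping for the subdiagonals of $M_1\log M_1$ and the check that $(\nu_L+1)\log(\nu_L+1)-\nu_L\log\nu_L>0$ simply supply details that the paper asserts without computation.
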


\begin{proposition} \label{alphaL12}
Consider $1 < \alpha_L < 2$. Because the singularity in the CGMY measure has been already integrated out, the last term in the operator $\mathcal{L}_L$ could be taken out of the jump operator and moved to the diffusion part. Suppose that the following discretization scheme for the remaining operator
\[ L_L = \lambda_L \Gamma(-\alpha_L) \left[ \left(\nu_L + \triangledown\right)^{\alpha_L} - \nu_L^{\alpha_L} \right]
\]
is in order
\begin{align} \label{cgmy2l}
 M &= \lambda_L \Gamma(-\alpha_L) \left[ \left(A_2^C + \nu_L^2 I + 2 \nu_L A^C\right)  \left( \nu_L I + A^B_2\right)^{\alpha_L-2} - \nu_L^{\alpha_L} I \right]
 \end{align}
 \noindent where $A^C_2 = A^F\dot A^B$ is the central difference approximation of the second derivative $\triangledown^2$, $A^C = (A^F + A^B)/2$ is the central difference approximation of the first derivative $\triangledown$. Then $M$ is an $O(h^2)$ approximation  of the operator $L_L$ and the negative of an M-matrix.
\end{proposition}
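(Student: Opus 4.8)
The cleanest route is to exploit the reflection symmetry that maps the left operator $\mathcal{L}_L$ onto the right operator $\mathcal{L}_R$, so that the statement follows from Proposition~\ref{alphaR12} with essentially no new computation. I would introduce the substitution $x \mapsto -x$, under which $\triangledown \mapsto -\triangledown$ and $(\nu_L + \triangledown)^{\alpha_L}$ becomes $(\nu_L - \triangledown)^{\alpha_L}$, i.e.\ exactly the form of $\mathcal{L}_R$ under the relabeling $\nu_R \mapsto \nu_L$, $\alpha_R \mapsto \alpha_L$. On the uniform grid of the proposition this reflection is realized by the order-reversing permutation $P$ (the anti-identity), and a short computation gives $P A^F P = -A^B$, $P A^B P = -A^F$, hence $P A^C P = -A^C$, $P A_2^C P = A_2^C$, and $P A^B_2 P = -A^F_2$. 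Applying $P(\cdot)P$ to the matrix $M$ of Proposition~\ref{alphaL12} then sends $A_2^C + \nu_L^2 I + 2\nu_L A^C \mapsto A_2^C + \nu_L^2 I - 2\nu_L A^C$ and $(\nu_L I + A^B_2)^{\alpha_L-2} \mapsto (\nu_L I - A^F_2)^{\alpha_L-2}$, i.e.\ precisely the matrix appearing in Proposition~\ref{alphaR12}.

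The remaining observation is that permutation similarity preserves everything we care about. The spectrum of $P M P$ equals that of $M$, so both the order of approximation (a statement about the consistency error as $h \to 0$, which is unaffected by a relabeling of nodes since $P$ is an isometry) and the eigenvalue condition are inherited. Moreover the defining data of an M-matrix --- nonpositive off-diagonal entries together with eigenvalues of positive real part (\cite{BermanPlemmons1994}) --- are invariant under conjugation by a permutation matrix, because $P$ merely relabels indices and thus carries sign patterns to sign patterns. Hence $M$ of Proposition~\ref{alphaL12} is the negative of an M-matrix and an $O(h^2)$ approximation of $\mathcal{L}_L$ exactly when its reflected image is, and the latter is the content of Proposition~\ref{alphaR12}. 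I would also note that the existence condition $Re(\nu_L + \triangledown) > 0$ maps to $Re(\nu_R - \triangledown) > 0$, so the hypotheses transport correctly.

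If instead one prefers a self-contained argument paralleling the Appendix proof of Proposition~\ref{alphaR12}, I would mirror it step by step. First factor $(\nu_L + \triangledown)^{\alpha_L} = (\nu_L + \triangledown)^2 (\nu_L + \triangledown)^{-\varepsilon}$ with $\varepsilon = 2 - \alpha_L \in (0,1)$, discretizing the convection--diffusion factor $(\nu_L + \triangledown)^2 = \nu_L^2 + 2\nu_L \triangledown + \triangledown^2$ by the second-order central stencils $A^C$, $A_2^C$, and the fractional factor $(\nu_L + \triangledown)^{\alpha_L-2}$ by $(\nu_L I + A^B_2)^{\alpha_L-2}$; each factor is $O(h^2)$, so the product is. For the sign structure I would check that $D_L := A_2^C + \nu_L^2 I + 2\nu_L A^C$ is, for $h < 1/\nu_L$, a Metzler matrix with negative diagonal (its lower off-diagonal entry $1/h^2 - \nu_L/h$ being nonnegative then), and that $(\nu_L I + A^B_2)^{\alpha_L-2} = \exp\big((\alpha_L-2)\log(\nu_L I + A^B_2)\big)$ is an entrywise positive matrix: $\nu_L I + A^B_2$ has an M-matrix logarithm (its second lower-diagonal entries being small relative to the main and first lower diagonals, exactly as noted for $\nu_R I - A^F_2$ in the case $\alpha_R < 0$), and multiplying that logarithm by $\alpha_L - 2 < 0$ yields a Metzler matrix whose exponential is positive.

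The hard part in this direct version is the same as in the right case: showing that the product $D_L (\nu_L I + A^B_2)^{\alpha_L-2}$, shifted by $-\nu_L^{\alpha_L} I$ and scaled by $\lambda_L \Gamma(-\alpha_L) > 0$, is again the negative of an M-matrix. A product of a Metzler matrix with a positive matrix need not have a definite off-diagonal sign, so one cannot finish by a one-line sign count; the Appendix proof controls this using the near-diagonal dominance of the two factors and the explicit size of the leading $O(h^{-2})$ entries to pin down simultaneously the off-diagonal nonnegativity of $D_L (\nu_L I + A^B_2)^{\alpha_L-2}$ and the negativity of its diagonal after subtracting $\nu_L^{\alpha_L}$. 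This delicacy is precisely why I would favor the reflection argument, which imports the estimate wholesale from Proposition~\ref{alphaR12} rather than re-deriving it.
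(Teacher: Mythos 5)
Your proposal is correct, and your primary (reflection) route is genuinely different from the paper's: the paper gives no separate proof of Proposition~\ref{alphaL12} at all, stating only that the proofs for the $\mathcal{L}_L$ propositions are omitted as ``very similar'' to the $\mathcal{L}_R$ case --- i.e., the intended proof is exactly your second, mirrored re-derivation of the Appendix argument (factor $(\nu_L+\triangledown)^{\alpha_L} = (\nu_L+\triangledown)^2(\nu_L+\triangledown)^{\alpha_L-2}$, discretize each factor, and redo the EM-matrix / eventually-nonnegative lemmas with $\nu_L I + A^B_2$ in place of $\nu_R I - A^F_2$). Your conjugation by the anti-identity $P$ upgrades the paper's informal ``similarity'' into an exact reduction: $P A^F P = -A^B$, $P A^B_2 P = -A^F_2$, $P A^C P = -A^C$, $P A^C_2 P = A^C_2$ on the interior rows, matrix functions commute with the involution $X \mapsto PXP$, and permutation similarity preserves both the off-diagonal sign pattern and the spectrum, so the negative-of-an-M-matrix property and the $O(h^2)$ consistency transfer verbatim. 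This buys economy (no re-derivation) and robustness: in particular, the weakest step of the Appendix proof --- the one-line claim that a product of a nonnegative matrix and a Metzler matrix is again Metzler, which is false for arbitrary matrices and is precisely the step your last paragraph distrusts (note the Appendix does finish with that claim, not with a dominance estimate as you suggest) --- is not something your reduction needs to re-examine: whatever is or is not established for Proposition~\ref{alphaR12} holds word-for-word for Proposition~\ref{alphaL12}. Two caveats you should record: (i) the reflection identities hold exactly only for the Toeplitz (interior) part of the stencils, so the boundary closures of the left and right discretizations must be chosen as mirror images of one another (the paper glosses over boundary rows as well); (ii) Proposition~\ref{alphaR12} sits in a context where $\nu_R > 1$ is assumed (inherited from \eqref{whole}), while here only $\nu_L > 0$ holds, so to import it for $\nu_L \le 1$ you must observe that the Appendix proof never uses $\nu_R > 1$ once the drift term --- whose coefficient $\nu_R^{\alpha_R} - (\nu_R-1)^{\alpha_R}$ is the only place that condition matters --- has been moved to the diffusion part.
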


\section{Other numerical experiments} \label{Sec6}
In this section we provide a numerical solution of the whole problem (not just one step) to compare it with the existing analytical one. In the first test we used our numerical approach to compute prices of European vanilla options under the Bates model (a Heston jump-diffusion model with Merton's jumps). This solution was compared with the semi-analytical solution obtained by using an inverse Fourier Transform (FFT) since the characteristic function for the Bates model is known in closed form; see, e.g., \cite{Crepey2000}.

For the diffusion step we used the method described in detail in \cite{HoutFoulon2010}. A nonuniform space grid was constructed in both $x$ and $v$ dimensions which contained 100 nodes in $x \in [0,S_{max}], \ S_{max} = 40 \max(S_0,K)$, and 40 nodes in $v \in [0, v_{max}], \ v_{max} = 5 v_0$. Here $K$ is the strike, $S_0, v_0$ are the initial levels of the stock price and instantaneous variance. For the jump step this grid was extended to $S_{up} = 10^4$. Further increase of $S_{up}$ does not influence the option price much, so this boundary was chosen based on a practical argument. The steps of the jump grid when outside of the diffusion grid (where they both coincide with each other) grew according to geometric progression $h_i = h \times g^i$, where $h = (S_{max} - S_{min})/N$ is an average step size for the diffusion grid, $g$ is the growth factor, which in our experiments was chosen as $g=1.03$. The total jump grid thus contained 237 nodes, 75 of which were the diffusion grid nodes.

The initial parameters used in the test are given in Table~\ref{TabParam}. Here $C$ stays for a call option while $P$ for a put option, $r$ is the interest rate, $q$ is the dividend yield, $\kappa$ is the mean-reversion rate, $\xi$ is the volatility of volatility, $\rho$ is the correlation coefficient, $\theta$ is the mean-reversion level.
\begin{table}[H]
\begin{center}
\begin{tabular}{|c|c|c|c|c|c|c|c|c|c|c|c|c|}
\hline
Test & $T$ & $K$ & $r$ & $q$ & $C/P$ & $\xi$ & $\rho$ & $\kappa$ & $\theta$ & $\lambda$ & $\mu_J$ & $\sigma_J$   \cr
\hline
1 & 1 & 100 & 0.05 & 0.0 & C & 0.3 & -0.5 & 1.5 & 0.1 & 5 & 0.3 & 0.1   \cr
 \hline
\end{tabular}
\end{center}
\caption{Initial parameters used in test calculations.}
\label{TabParam}
\end{table}

We computed European option prices under the Bates model in two ways. The first approach utilizes the fact that the characteristic function of the Bates model is known in closed form. Therefore, pricing of European options can be done using any FFT algorithm. Here we used a standard version of the \cite{CarrMadan} method with a constant dumping factor $\alpha = 1.25$ and $N=8192$ nodes. The second approach (FDE) uses an algorithm described in this paper, i.e., splitting and matrix exponentials, where the diffusion (Heston) equation was solved using the method of fractional steps described in \cite{HoutFoulon2010}.

In Fig.~\ref{Fig1} absolute and relative differences in prices obtained in our experiments are presented as a function of moneyness $M = S_0/K$. It is seen that the relative differences between the FDE prices and that obtained with the FFT method are about 0.2\% for ITM options with $1 < M < 1.4$, while they drop down  to 0.8\% for $M = 0.5$\footnote{As it was mentioned in Introduction, in this particular case FFT is definitely more efficient, so we provide this comparison just for illustrative purposes.}

\begin{figure}[ht]
\begin{minipage}{0.48\linewidth}
\begin{center}
\fbox{\includegraphics[width=3.1 in]{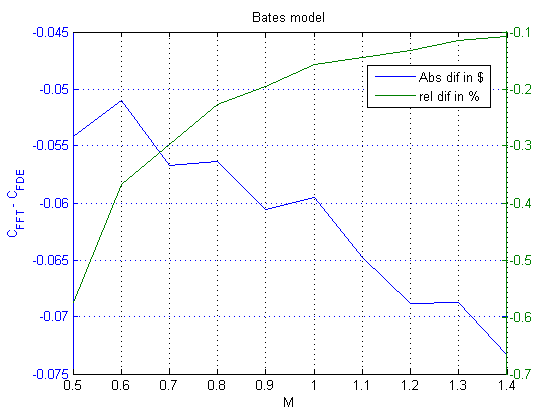}}
\caption{Absolute and relative differences in call option price as a function of moneyness $M$ for the Bates model computed using an FFT algorithm (FFT) and the algorithm of this paper (FDE).}
\label{Fig1}
\end{center}
\end{minipage}
\hspace{0.04\linewidth}
\begin{minipage}{0.48\linewidth}
\vspace{-0.2 in}
\begin{center}
\fbox{\includegraphics[width=3.1 in]{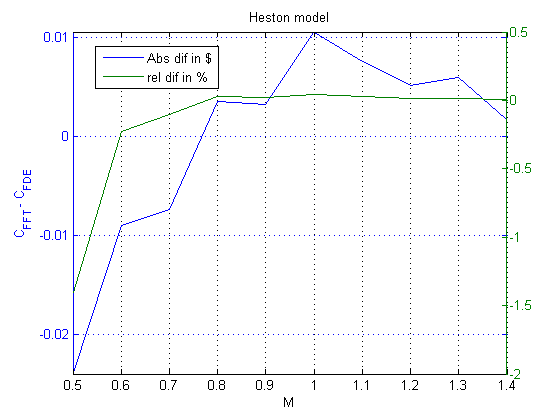}}
\caption{Absolute and relative differences in call option price as a function of moneyness $M$ for the Heston model computed using an FFT algorithm and FDE.}
\label{Fig3}
\end{center}
\end{minipage}
\end{figure}

To see how much of the observed numerical error could be attributed to the Heston model itself, e.g., to the finite-difference algorithm for computing a pure diffusion part, we repeated this test with no jumps and presented these results
in Fig.~\ref{Fig3}.

In the second test we considered a model similar to Bates, but with jumps simulated using the VG model. We used the parameters in Table~\ref{TabParam}. In addition, the VG model parameters were chosen as: $\theta =    0.1, \sigma = 0.4, \nu = 3$, which translates\footnote{For explicit formulae to provide this translation, see \cite{Madan:1989}.} to $\nu_R = 1.5098, \nu_L = 2.7598, \lambda_R = \lambda_L = 0.33$. The grid was constructed as it was in the previous test. However the upper boundary of the jump grid was moved to 10$^5$, and $S_{max} = 20 \max(S_0,K)$. Again we computed European option prices in two ways. As the characteristic function of the VG model is known in closed form, the characteristic function of our model is a product of that for the Heston and VG models. We then used an FFT algorithm proposed by Alan Lewis, and as applied to the VG model discussed in detail in \cite{ItkinVG}. The second approach uses the algorithm described in this paper.

In Fig.~\ref{Fig2}, the absolute and relative differences in prices obtained by these two methods are presented as a function of the moneyness $M = S_0/K$. Here FDE behaves worse than in the case of the Bates model, because we used just the first order approximation in $h$. Still, the relative difference with the FFT solution is less than 0.5\%, and for $M \approx 0.5$ the difference rises to only 1.7\%.

\begin{figure}[h]
\begin{center}
\fbox{\includegraphics[width=3.5in]{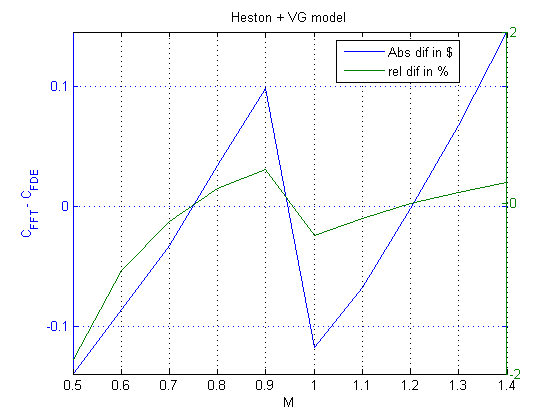}}
\caption{Absolute and relative differences in call option price as a function of moneyness $M$ under the Heston+VG model computed using Lewis's FFT algorithm and FDE.}
\label{Fig2}
\end{center}
\end{figure}

\section{Conclusion}

In this paper (which is a further extension of our paper \cite{ItkinCarr2012Kinky}) we proposed a new method to solve jump-diffusion PIDEs. This method exploits a number of ideas, namely:

\begin{enumerate}
\item First, we transform a linear non-local integro-differential operator (jump operator) into a local nonlinear (fractional) differential operator. Thus, the whole jump-diffusion operator $\mathcal{J} + \mathcal{D}$ is represented as a sum of the linear and non-linear parts.
\item Second, operator splitting on financial processes\footnote{This is similar to splitting on physical processes, e.g., convection and diffusion, which is well-known in computational physics.} is applied to this operator, namely splitting a space operator into diffusion and jumps parts. For nonlinear operators, this approach was elaborated on based on the definition of Lie derivative (see \cite{ThalhammerKoch2010}). The described splitting scheme provides a second-order approximation of $\mathcal{J} + \mathcal{D}$ in time.
\item At the third step various finite-difference approximations of the non-linear differential operator $\mathcal{J}$ are proposed for the Merton, Kou and GTSP (a.k.a., CGMY or KoBoL) models. We demonstrated how to construct these approximations to (i) be unconditionally stable, (ii) be of first- and second-order approximation in the space grid step size $h$ and (iii) preserve positivity of the solution. The results are presented as propositions, and the corresponding proofs are given based on modern matrix analysis, including a theory of M-matrices, Metzler matrices and eventually exponentially nonnegative matrices.
\item It is shown that with a minor modification the method could be applied to the CGMY model with parameter $\alpha > 1$. That is the range where similar algorithms, e.g., \cite{WangWanForsyth2007} experienced a problem. We show how to construct the second order approximation and provide the results of numerical experiments that confirm the second order convergence. Performance-wise matrix exponential followed by computation of a product of matrix by vector seems to be a preferable choice in this case as Picard iterations converge very slow. That is because the maximum eigenvalue of the transition matrix in this case is close to 1. Also under this condition the round-off errors could be important.
\end{enumerate}

All these results seem to be new. The method is naturally applicable to both uniform and nonuniform grids, and is easy for programming, since the algorithm is similar to all jump models. Also notice that the present approach allows pricing some exotic, e.g., barrier options as well. In addition, it respects not just vanilla but also digital payoffs. In principle, American and Bermudan options could also be priced by this method, however this requires some more delicate consideration which will be presented elsewhere.

\section*{Acknowledgments}
I thank Peter Carr and Peter Forsyth for very fruitful discussions, and Igor Halperin and Alex Lipton for useful comments. Also various suggestions of three anonymous referees significantly improved the paper, so their work is appreciated. I am indebted to Gregory Whitten, Steven O'Hanlon and Serguei Issakov for supporting this work, and to Nic Trainor for editing the manuscript. I assume full responsibility for any remaining errors.

\clearpage
\bibliographystyle{apalike}

\newcommand{\noopsort}[1]{} \newcommand{\printfirst}[2]{#1}
  \newcommand{\singleletter}[1]{#1} \newcommand{\switchargs}[2]{#2#1}

\newpage
\appendix
\appendix
\appendixpage
\section{Proof of Proposition~\protect{\ref{alphaR12}}}
To prove this proposition we need technique which is closely related to the concept of an ``eventually positive matrix''; see \cite{Noutsos2008}. Below we reproduce some definitions from this paper necessary for our further analysis.

\begin{definition} \label{def1}
An $N\times N$ matrix $A = [a_{ij}]$ is called
\begin{itemize}
\item {\it eventually nonnegative}, denoted by $A \overset{v}{\ge} 0$, if there exists a
positive integer $k_0$ such that $A^k \ge 0$ for all $k > k_0$; we denote the
smallest such positive integer by $k_0 = k_0(A)$ and refer to $k_0(A)$ as the power index
of $A$;

\item {\it exponentially nonnegative} if for all $t > 0, \ e^{t A} = \sum_{k=0}^\infty \frac{t^k A^k}{k!} \ge 0$;

\item {\it eventually exponentially nonnegative} if there exists $t_0 \in [0,\infty)$ such
that $e^{t A} \ge 0$ for all $t > t_0$. We denote the smallest such nonnegative
number by $t_0 = t_0(A)$ and refer to it $t_0(A)$ s the exponential index of $A$.
\end{itemize}
\end{definition}

We also need the following Lemma from \cite{Noutsos2008}:
\begin{lemma} \label{lemma1}
Let $A \in \mathbb{R}^{N\times N}$. The following are equivalent:
\begin{enumerate}
\item $A$ is eventually exponentially nonnegative.
\item $A + b I$ is eventually nonnegative for some $b \ge 0$.
\item $A^T + b I$ is eventually nonnegative for some $b \ge 0$.
\end{enumerate}
\end{lemma}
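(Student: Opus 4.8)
The plan is to split the claim into the two genuinely separate assertions it bundles together: the $O(h^2)$ consistency of $M$ with $L_R$, and the statement that $-M$ plays the role of an M-matrix in the sense demanded by Proposition~\ref{prop0}, i.e. that it produces an unconditionally stable and positivity-preserving exponential. For consistency I would exploit the factorization already announced in the text,
\begin{equation*}
(\nu_R - \triangledown)^{\alpha_R} = (\nu_R - \triangledown)^2 \, (\nu_R - \triangledown)^{-\varepsilon}, \qquad \varepsilon \equiv 2 - \alpha_R \in (0,1).
\end{equation*}
The first factor is the polynomial convection--diffusion operator $\nu_R^2 - 2\nu_R\triangledown + \triangledown^2$, for which the central stencils $A^C$ and $A^C_2$ each supply an $O(h^2)$ approximation; this reproduces the block $A^C_2 + \nu_R^2 I - 2\nu_R A^C$ in \eqref{cgmy2}. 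The second factor has exponent $-\varepsilon \in (-1,0)$, which is exactly the regime handled in the case $\alpha_R < 0$, so the second-order one-sided stencil $A^F_2$ yields $(\nu_R I - A^F_2)^{-\varepsilon} = (\nu_R I - A^F_2)^{\alpha_R - 2}$ as an $O(h^2)$ approximation. I would then argue that the product of two second-order-consistent discretizations of commuting operators is itself second-order consistent for the product operator, the one point needing care being the $O(h^2)$ commutator error between the distinct stencils $A^C$, $A^C_2$ and $A^F_2$; subtracting the exact scalar $\nu_R^{\alpha_R} I$ and scaling by $\lambda_R\Gamma(-\alpha_R)$ preserves the order. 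This settles the first half of the claim.

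For stability I would first record that $\nu_R I - A^F_2$ is upper triangular with constant diagonal $\nu_R + 3/(2h) > 0$, so its unique eigenvalue satisfies the existence condition $\mathrm{Re}(\nu_R - \triangledown) > 0$ of \eqref{whole} and the power $(\nu_R I - A^F_2)^{\alpha_R - 2}$ is well defined. The goal is then to show that every eigenvalue of $M$ has non-positive real part, so that $\|e^{\Delta\tau M}\| \le 1$ and the scheme is unconditionally stable. Because $M$ is assembled from three non-commuting stencils, its spectrum is not the product of the diagonal entries; I would therefore control it through the generalized-M-matrix structure rather than by direct diagonalization, tying the discrete spectrum to the non-positive spectrum of the continuous generator $L_R$.

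The positivity statement is where the apparatus of the appendix enters, and it is also where the phrase ``the negative of an M-matrix'' must be read in the generalized sense. As the analysis preceding the proposition shows, no first-order stencil can make $-M$ a genuine M-matrix for $1 < \alpha_R < 2$, and the factored discretization is not Metzler either: the central convection term together with the alternating-sign off-diagonals of the nilpotent expansion of $(\nu_R I - A^F_2)^{-\varepsilon}$ produce superdiagonal entries of indefinite sign. The plan is therefore to prove positivity through Definition~\ref{def1} and Lemma~\ref{lemma1}: I would pick a shift $b \ge 0$ large enough to dominate the diagonal of $M$, show that $M + bI$ is \emph{eventually nonnegative} (its high powers are entrywise nonnegative), and then invoke Lemma~\ref{lemma1} to conclude that $M$ is \emph{eventually exponentially nonnegative}, whence $e^{\Delta\tau M} \ge 0$ and the solution stays non-negative.

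The main obstacle is precisely this eventual-nonnegativity step. Verifying $(M + bI)^k \ge 0$ for $k \ge k_0$ cannot be done entrywise; the natural route is a Perron--Frobenius argument for eventually nonnegative matrices in the sense of \cite{Noutsos2008, LeMcDonald2006}, establishing that the dominant eigenvalue of $M + bI$ is real, simple and positive with strictly positive left and right Perron vectors, from which nonnegativity of $(M+bI)^k$ for large $k$ follows. The delicate quantitative point is the exponential index $t_0(M)$: the construction guarantees $e^{tM} \ge 0$ only for $t > t_0$, so I would have to argue that $t_0$ is small enough for the time steps actually used, ideally by showing that the sign-violating off-diagonals are of lower order in $1/h$ and vanish as $h \to 0$. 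Pinning down $b$, the power index $k_0$ and the exponential index $t_0$ as functions of $h$, $\nu_R$ and $\alpha_R$ is the technical heart of the proof.
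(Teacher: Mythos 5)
Your proposal does not prove the statement it was supposed to prove. The statement is Lemma~\ref{lemma1}: the equivalence, for a general matrix $A \in \mathbb{R}^{N\times N}$, of (1) $A$ being eventually exponentially nonnegative, (2) $A + bI$ being eventually nonnegative for some $b \ge 0$, and (3) the same for $A^T + bI$. What you have written instead is a proof sketch for Proposition~\ref{alphaR12} --- the $O(h^2)$ consistency of the discretization \eqref{cgmy2} and the generalized M-matrix structure of $M$ for $1 < \alpha_R < 2$. Worse, your sketch explicitly \emph{invokes} Lemma~\ref{lemma1} as a black box (``then invoke Lemma~\ref{lemma1} to conclude that $M$ is eventually exponentially nonnegative''), so even on the most charitable reading it is circular with respect to the statement under review: nowhere do you establish the equivalence of the three conditions, for $M$ or for any other matrix. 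For the record, the paper itself does not prove this lemma either; it is imported verbatim from \cite{Noutsos2008}, so the relevant benchmark is the argument in that reference, and nothing in your text reproduces or replaces it.

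To actually prove the lemma you would need arguments of the following kind, none of which appear in your proposal. The equivalence of (2) and (3) is immediate from $\left((A+bI)^k\right)^T = (A^T+bI)^k$, so eventual nonnegativity transfers between a matrix and its transpose with the same shift and power index. For (2) $\Rightarrow$ (1), write $e^{tA} = e^{-bt}\, e^{t(A+bI)}$ and split the exponential series at the power index $k_0(A+bI)$: the tail $\sum_{k \ge k_0} t^k (A+bI)^k/k!$ is entrywise nonnegative, and one must show it dominates the finitely many possibly sign-violating low-order terms for all $t$ beyond some $t_0$, which yields eventual exponential nonnegativity. The converse direction (1) $\Rightarrow$ (2) is the substantive one and requires extracting eventual nonnegativity of the powers $(A+bI)^k$ from nonnegativity of $e^{tA}$ for large $t$ (in \cite{Noutsos2008} this is done via the Perron--Frobenius structure of eventually nonnegative matrices). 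Your discussion of stencils, commutator errors, and the exponential index $t_0(M)$ for the specific matrix $M$ of \eqref{cgmy2} is material for the appendix proof of Proposition~\ref{alphaR12}, not for this lemma, and cannot be repaired into a proof of it without starting over.
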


We also introduce a definition of an EM-matrix, see \cite{ElhashashSzyld2008}.
\begin{definition} \label{def1}
An $N\times N$ matrix $A = [a_{ij}]$ is called an EM-Matrix if it can be represented as $A = sI - B$ with $0 < \rho(B) < s$, $s > 0$ is some constant, $\rho(B)$ is the spectral radius of $B$, and $B$ is an eventually nonnegative matrix.
\end{definition}

For the following we need two Lemmas.
\begin{lemma} \label{lemma2}
Let $A \in \mathbb{R}^{N\times N}$, and $A = \nu_R I - A_2^F$. Then $A$ is an EM-matrix.
\end{lemma}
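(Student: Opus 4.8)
The plan is to exhibit an explicit decomposition $A = sI - B$ and verify the three defining requirements of an EM-matrix: $s>0$, the eventual nonnegativity of $B$, and $0<\rho(B)<s$. First I would record the algebraic structure of $A_2^F$. By its definition in the Kou section, $A_2^F$ is upper triangular with constant diagonal $-3/(2h)$, so it can be written $A_2^F = \frac{1}{2h}\left(-3I + 4S - S^2\right)$, where $S$ is the nilpotent shift matrix carrying ones on the first superdiagonal (with the obvious truncation in the last rows, which keeps $A_2^F$ upper triangular). Consequently $A = \nu_R I - A_2^F = \left(\nu_R + \frac{3}{2h}\right)I - N$, with $N := \frac{1}{2h}\left(4S - S^2\right)$ strictly upper triangular, hence nilpotent. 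I would then set $s := \nu_R + \frac{3}{2h} + \gamma$ for a fixed $\gamma>0$ (e.g. $\gamma=1$), so that $B := sI - A = \gamma I + N$. Because $N$ is nilpotent, every eigenvalue of $B$ equals $\gamma$, giving $\rho(B)=\gamma$; thus $0<\rho(B)=\gamma<s$ holds since $s-\gamma=\nu_R+\frac{3}{2h}>0$. This settles the spectral-radius condition at once.

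The substance of the proof is showing that $B=\gamma I+N$ is eventually nonnegative in the sense of Definition~\ref{def1}. Since $B$ is a polynomial in the single nilpotent matrix $S$, I would work in the commutative ring of upper-triangular Toeplitz matrices and write $B=p(S)$ with $p(x)=\gamma+\frac{1}{2h}(4x-x^2)$. Then $B^k=p(S)^k$, and $B^k\ge 0$ is equivalent to all coefficients of $x^0,\dots,x^{N-1}$ in $p(x)^k$ being nonnegative. Writing $p(x)^k=\gamma^k\bigl(1+4ux-ux^2\bigr)^k$ with $u=1/(2h\gamma)>0$, a generic contribution to the coefficient of $x^m$ arises by selecting the monomial $4ux$ from $a$ factors and $-ux^2$ from $b$ factors with $a+2b=m$; its multinomial coefficient grows like $k^{a+b}=k^{\,m-b}$. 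The fastest-growing contribution ($b=0$, $a=m$) equals $\binom{k}{m}(4u)^m\gamma^k>0$ and dominates every term that uses the negative monomial $-ux^2$, each of which carries a strictly lower power of $k$. Hence for each fixed $m$ the coefficient is positive once $k$ is large, and since there are only finitely many superdiagonals I can choose a single $k_0$ making $B^k\ge 0$ for all $k>k_0$. Combined with the spectral bound already established, $A=sI-B$ then meets the definition of an EM-matrix.

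The main obstacle is precisely this coefficient-positivity step: the second-superdiagonal entry $-1/(2h)$ makes $N$ sign-indefinite, so $B$ is genuinely not nonnegative and ordinary M-matrix theory does not apply; the argument must exploit that the positive first-superdiagonal stencil dominates at high powers. A secondary point to handle is the boundary rows of $A_2^F$, where the one-sided stencil cannot be formed at the last one or two nodes and the rows are modified. As long as the modification preserves upper-triangularity (so that $N$ stays strictly upper triangular and the diagonal of $A$ stays constant at $\nu_R+\frac{3}{2h}$), the eigenvalue computation is unaffected and the entrywise dominant-term estimate still applies, so the conclusion is robust to this detail.
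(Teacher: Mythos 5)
Your proposal is correct, and it arrives at essentially the same decomposition as the paper: the paper takes $b=3/(2h)+\epsilon$ and writes $A=(\nu_R+b)I-(A_2^F+bI)$, where $A_2^F+bI$ is exactly your $B=\gamma I+N$ with $\epsilon$ playing the role of your $\gamma$, and both arguments read the spectral radius ($\epsilon$, resp.\ $\gamma$) off the triangular structure, which is what makes the strict inequality $0<\rho(B)<s$ in the EM-matrix definition work. Where you genuinely differ is in certifying eventual nonnegativity of $B$. The paper takes a detour: it first claims $A_2^F$ is eventually \emph{exponentially} nonnegative (positivity of $e^{tB}$ being ``verified explicitly'' for $t>N$), converts this via Lemma~\ref{lemma1} into eventual nonnegativity of $A_2^F+bI$, and then merely asserts that $B_1^{N+3}\ge 0$ is ``easy to check,'' supported only by the informal remark that the large positive first superdiagonal propagates over the negative second one under repeated squaring. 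You bypass Lemma~\ref{lemma1} and the exponential step entirely: identifying upper-triangular Toeplitz matrices with $\mathbb{R}[x]/(x^N)$, you show that the $m$-th superdiagonal entry of $B^k$ has a positive leading contribution of order $k^m$ while every contribution involving the negative monomial $-ux^2$ is $O(k^{m-1})$, so a single $k_0$ works for the finitely many diagonals. This replaces the paper's asserted computation with an actual proof and is self-contained; it also delivers exactly the eventual-nonnegativity statement that Lemma~\ref{lemma3} consumes downstream, so nothing later in the appendix is lost by skipping the exponential characterization. Two minor caveats: your dominance argument, like the paper's stencil description, treats $A_2^F$ as pure Toeplitz, and your claim that modified boundary rows are harmless is plausible but unproved (the coefficient correspondence breaks in those rows); and your $k_0$ grows with $\gamma$ and $m$, which is harmless for the definition but worth stating if one wants an explicit power index as the paper attempts with $N+3$.
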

\begin{proof}
Denote $d_i$ the $i$-th upper diagonal of $A$. So $d_0$ means the main diagonal, etc.

1. First, show that $A^F_2$ is an eventually exponentially nonnegative matrix. To see this use representation $e^{t A_2^F} = [e^{t B}]^{1/(2h}$ where $B$ is an upper tridiagonal matrix with all $d_0$ elements equal to -3, all $d_1$ elements equal to 4, and all $d_2$ elements equal to -1.  Positivity of $e^{t B}$ can be verified explicitly at $t > N$. The intuition behind that is that the elements on $d_2$ are small in absolute values as compared with that of $d_1$. Taking the square of $B$ propagates large positive values on $d_1$ to the diagonal $d_2$. Taking the square of $B^2$ propagates them to $d_3$, etc.

From $h > 0$ it follows that $e^{t A_2^F} \ge 0$, i.e. $A_2^F$ is eventually exponentially nonnegative.

According to Lemma~\ref{lemma1}, the eventual exponential nonnegativity of $A^F_2$ means that there exists $b \ge 0$ such that $A_2^F + b I = \frac{1}{2h}(B + 2 h b I)$ is eventually nonnegative for some $b \ge 0$. Let us denote $B_1 = B + 2 h b I$ and chose $b = 3/(2h) + \epsilon$, where $\epsilon \ll 1$. In practical examples we can choose $\epsilon = 1.e-6$. Then $d_0(B_1) = \epsilon, d_1(B_1) = 2, d_2(B_1) = -1$. It is easy to check that $B_1^(N+3) \ge 0$. Again that is because $d_1(B_1) > 0, |d_1(B_1)| > |d_2(B_1)|$, so taking the square of $B_1$ propagates large positive values on $d_1$ to the diagonal $d_2$, etc. Thus, $A^F_2 + b I$ with $b = 3/(2h) + \epsilon$ is the eventually nonnegative matrix.

2. Represent $A$ as $A = (\nu_R + b)I - (A_2^F + b I)$. Observe, that $\rho(A_2^F + b I) = \epsilon$ and $s = (\nu_R + b) > \epsilon$. Thus, by definition, $A$ is an EM-matrix. $\blacksquare$
\end{proof}

\begin{lemma} \label{lemma3}
The inverse of the matrix $A = (\nu_R + b)I - (A_2^F + b I) \equiv sI - P$ is a nonnegative matrix.
\end{lemma}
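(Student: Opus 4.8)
The plan is to exploit the fact that $A_2^F$, being a one-sided \emph{forward} second-order approximation of $\triangledown$, is upper triangular with bandwidth two, so that the EM-matrix of Lemma~\ref{lemma2} can be written as $A = sI - P = aI - Q$, where $a \equiv \nu_R + 3/(2h)$ is the (positive) common diagonal entry and $Q \equiv aI - A = A_2^F + (3/(2h))I$ is \emph{strictly} upper triangular, hence nilpotent ($Q^N = 0$). Consequently $A$ is invertible (nonzero diagonal), $A^{-1}$ is again upper triangular, and the Neumann series terminates: $A^{-1} = a^{-1}\sum_{k=0}^{N-1}(Q/a)^k$. The diagonal entries of $A^{-1}$ all equal $1/a > 0$ and every entry below the diagonal vanishes, so the whole claim reduces to nonnegativity of the strictly-upper entries $(A^{-1})_{ij}$, $i<j$. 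This is precisely where the structure matters: $P$ differs from $Q$ only by $\epsilon I$, and although $\rho(Q)=0<a$ secures a (finite) expansion, $Q$ itself is \emph{not} nonnegative — its second super-diagonal carries the negative weight $-1/(2h)$ — so the nonnegativity of the sum must still be established.

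First I would fix a column index $j$ and set $v = A^{-1} e_j$. Solving the upper-triangular system $A v = e_j$ by back substitution gives $v_i = 0$ for $i>j$, $v_j = 1/a$, and, for $i<j$, the homogeneous recurrence $a v_i + c\, v_{i+1} + d\, v_{i+2} = 0$, with $c \equiv -2/h$ (first super-diagonal of $A$) and $d \equiv 1/(2h)$ (second super-diagonal). Reindexing by $m = j-i$ and writing $z_m \equiv v_{j-m}$ turns this into $z_m = \alpha z_{m-1} - \beta z_{m-2}$, where $\alpha \equiv -c/a = (2/h)/a > 0$ and $\beta \equiv d/a = (1/(2h))/a > 0$, subject to $z_{-1}=0$, $z_0 = 1/a$. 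Thus proving $A^{-1}\ge 0$ is equivalent to proving $z_m \ge 0$ for $0 \le m \le j-1$ and every $j$.

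The recurrence has characteristic polynomial $r^2 - \alpha r + \beta$, whose roots satisfy $r_1 r_2 = \beta > 0$ and $r_1 + r_2 = \alpha > 0$; hence whenever they are real they are both positive. The discriminant $\alpha^2 - 4\beta = \frac{2}{ha}\big(2/(ha) - 1\big)$ is nonnegative exactly when $ha \le 2$, i.e. when $\nu_R h + 3/2 \le 2$, that is $h \le 1/(2\nu_R)$. Under this mild mesh restriction the closed form is $z_m = (r_1^{m+1}-r_2^{m+1})/\big(a(r_1-r_2)\big)$, which one verifies against $z_0 = 1/a$ and $z_1 = \alpha/a$; since $r_1 > r_2 > 0$, every such $z_m$ is strictly positive. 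This gives $(A^{-1})_{ij} > 0$ for $i<j$, and together with the positive diagonal and the zero lower triangle yields $A^{-1} \ge 0$.

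The main obstacle is exactly the sign control in this last step: $A$ fails to be an M-matrix (its positive second super-diagonal breaks the M-matrix sign pattern), so the classical theorem ``M-matrix $\Rightarrow$ nonnegative inverse'' is unavailable, and the negative entries of $Q$ genuinely threaten nonnegativity of the terminating series. The recurrence reduction localizes this difficulty to the discriminant condition $h \le 1/(2\nu_R)$ — a natural companion to the stability restrictions already encountered for the Kou model — since for complex roots one would have to separately rule out oscillatory sign changes. As an alternative, purely structural route, one could instead invoke the nonnegative-inverse theory for EM-matrices and M-type matrices built from irreducible eventually nonnegative matrices (\cite{LeMcDonald2006, ElhashashSzyld2008}) applied to the decomposition of Lemma~\ref{lemma2}, checking the requisite irreducibility and Perron-type hypotheses for our specific $A_2^F$; I expect verifying those hypotheses to be of comparable effort to the explicit recurrence argument above.
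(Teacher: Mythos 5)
Your proof is correct, and it takes a genuinely different route from the paper. The paper stays inside the eventually-nonnegative-matrix framework set up in Lemma~\ref{lemma2}: it observes that $P = A_2^F + bI$ has spectrum $\{\epsilon\}$, hence $\rho(P)=\epsilon$ and $\mathrm{index}_0(P)=0$, and then invokes Theorem~4.2 of \cite{LeMcDonald2006} (nonnegative inverse of $sI-P$ for $P$ irreducible and eventually nonnegative) to conclude $(sI-P)^{-1}\ge 0$. You instead exploit the banded upper-triangular structure directly: $A = aI - Q$ with $Q$ nilpotent, back substitution turns each column of $A^{-1}$ into the three-term recurrence $z_m = \alpha z_{m-1} - \beta z_{m-2}$, and positivity is read off from the roots of $r^2 - \alpha r + \beta$. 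What your route buys is an explicit, self-contained sign analysis together with the quantitative mesh condition $h \le 1/(2\nu_R)$ (with the double-root case $ha=2$ covered by the limit $z_m = (m+1)r^m/a > 0$).

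The mesh restriction you found is not a weakness of your method --- it is necessary, and this is where your approach exposes a real problem with the paper's argument. For coarse grids the unconditional statement of Lemma~\ref{lemma3} is false: take $\nu_R = 1$, $h = 1$, $N = 10$, so $a = 2.5$, $\alpha = 0.8$, $\beta = 0.2$; the roots are complex, and back substitution gives $(A^{-1})_{4,10} = -0.0007424\ldots < 0$, exactly the oscillatory sign change you warned about. Correspondingly, the paper's appeal to Theorem~4.2 of \cite{LeMcDonald2006} cannot be sound as written: that theorem requires $P$ to be \emph{irreducible}, which an upper-triangular matrix never is for $N \ge 2$, and it asserts the existence of some $\mu > \rho(P)$ with $(sI-P)^{-1}\ge 0$ for $\rho(P) < s < \mu$, whereas the paper simply ``chooses'' $\mu > s$, which the theorem does not permit. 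So your proof is the more reliable one; the only change you should make is to state the lemma with the standing assumption $\nu_R h \le 1/2$, which is harmless since $h \to 0$ in any convergence analysis.
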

\begin{proof}
Observe that all eigenvalues of $P$ are $\lambda_i = \epsilon, \ \forall i \in [1,N]$. Therefore $\rho(P) = \epsilon$. Following \cite{LeMcDonald2006} denote $index_\lambda(A)$ to be the degree of $\lambda$ as a root of the minimal polynomial of $A$. As matrix $P$ doesn't have zero eigenvalues in its spectrum $index_0 (P) = 0 < 1$.

Nonnegativity of $A^{-1}$ then follows from the Theorem
\begin{theorem}[Theorem 4.2 in \cite{LeMcDonald2006}]
Let $P$ be an $N \times N$ irreducible eventually nonnegative matrix with $index_0(P) \le 1$,
then there exists $\mu > \rho(P)$ such that if $\mu > s > \rho(P)$, then $(s I - P)^{-1} \ge 0$.
\end{theorem}
To apply this Theorem choose any positive $\mu > s$.
\end{proof}

Now we are ready to prove the Proposition~\ref{alphaR12}.
\begin{proof}[Proof of Proposition~\ref{alphaR12}]
Recall, that in the Proposition~\ref{alphaR12} the following scheme is proposed in \eqref{cgmy2}
\begin{align*}
 M &= \lambda_R \Gamma(-\alpha_R) \left[ \left(A_2^C + \nu_R^2 I - 2 \nu_R A^C\right)  \left( \nu_RI - A^F_2\right)^{\alpha_R-2} - \nu_R^{\alpha_R} I \right]
\end{align*}

We prove separately each statement of the proposition, namely:
\begin{enumerate}
\item The above scheme is $O(h^2)$ approximation of the operator $ L_R$;
\item Matrix $M$ is the negative of an M-matrix.
\end{enumerate}

{\it Proof of (1):} This follows from the fact that $A^C$ is a central difference approximation of the operator $\triangledown$ to second order in $h$, while $A^F_2$ is the one-sided second order approximation.\\

{\it Proof of (2):} Matrix $M_1 = A_2^C + \nu_R^2 I - 2 \nu_R A^C$ has the following elements: $\nu_R^2 - \frac{2}{h^2}$ on the main diagonal, $\frac{\nu_R}{h} + \frac{1}{h^2}$ on the first lower diagonal,
and $ - \frac{\nu_R}{h} + \frac{1}{h^2}$ on the first upper diagonal. At small enough $h$ this is the negate of an M-matrix.

Matrix $M_2 = \nu_R - A^F_2$ by Lemma~\ref{lemma2} is an EM-matrix. Now observe that:
\begin{enumerate}
\item As $1 < \alpha_R < 2$, so $-1 < k < 0$.
\item The inverse of an EM-matrix $M_2$ is a nonnegative matrix, see Lemma~\ref{lemma3}.
\item A $k$ power of a nonnegative matrix with $ 0 < k < 1$ is a nonnegative matrix.
\end{enumerate}
A product of the nonnegative and the Metzler matrix is the Metzler matrix. Therefore, $M_1 M_2$ is the Metzler matrix, and so is $M = M_1 M_2 - \nu_R^{\alpha_R} I$. Since coefficient $\lambda_R \Gamma(-\alpha_R) > 0$ at $1 < \alpha_R < 2$, the entire matrix $L_R$ is the negative of an M-matrix.
That finalizes the proof. $\blacksquare$
\end{proof}

\end{document}